 \newtheorem{thm}{Theorem}[section]
 \newtheorem{prop}[thm]{Proposition}
 \theoremstyle{definition}
 \theoremstyle{remark}
 \numberwithin{equation}{section}
 \def\idtyty{{\mathchoice {\mathrm{1\mskip-4mu l}} {\mathrm{1\mskip-4mu l}} %
{\mathrm{1\mskip-4.5mu l}} {\mathrm{1\mskip-5mu l}}}}
\newcommand{\bC}{{\mathbb C}}
\newcommand{\bZ}{{\mathbb Z}}
\newcommand{\cA}{{\mathcal A}}
\newcommand{\cB}{{\mathcal B}}
\newcommand{\cH}{{\mathcal H}}
\newcommand{\caA}{{\mathcal A}}
\newcommand{\caG}{{\mathcal G}}
\newcommand{\caH}{{\mathcal H}}
\newcommand{\bbC}{{\mathbb C}}
\newcommand{\bbN}{{\mathbb N}}
\newcommand{\bbZ}{{\mathbb Z}}
\newcommand{\str}{^{*}}
\newcommand{\Tr}{\mathrm{Tr}}
\newcommand{\braket}[2]{\left\langle #1 , #2\right\rangle}
\newcommand{\bra}[1]{\langle #1 \vert} 
\newcommand{\ket}[1]{\vert #1 \rangle}
\newcommand{\spec}{\mathrm{spec}}
\newcommand{\be}{\begin{equation}}
\newcommand{\ee}{\end{equation}}
\newcommand{\bea}{\begin{eqnarray}}
\newcommand{\eea}{\end{eqnarray}}
\newcommand{\beann}{\begin{eqnarray*}}
\newcommand{\eeann}{\end{eqnarray*}}
\newcommand{\eq}[1]{(\ref{#1})}
\newcommand{\Lamdif}[1]{\Lambda_{n+1}\backslash\Lambda_{#1}}
\newcommand{\mbf}[1]{\mathbf{#1}}
\newcommand{\bfN}{\mathbf{N}}
\begin{document}

\renewcommand{\thefootnote}{\fnsymbol{footnote}}
\title[$d$-Dimensional PVBS models]{Product Vacua and Boundary State Models in
$d$-Dimensions}

\author[S. Bachmann]{Sven Bachmann}
\address{Mathematisches Institut \\
Ludwig-Maximilians-Universit\"at M\"unchen \\
Theresienstr.~39 \\
D-80333 M\"unchen, Germany}
\email{sven.bachmann@math.lmu.de}

\author[E. Hamza]{Eman Hamza}
\address{Department of Physics\\ Faculty of Science \\ Cairo University \\ Cairo 12613, Egypt}
\email{emanhamza@sci.cu.edu.eg}

\author[B. Nachtergaele]{Bruno Nachtergaele}
\address{Department of Mathematics\\
University of California, Davis\\
Davis, CA 95616, USA}
\email{bxn@math.ucdavis.edu}

\author[A. Young]{Amanda Young}
\address{Department of Mathematics\\
University of California, Davis\\
Davis, CA 95616, USA}
\email{amyoung@math.ucdavis.edu}

\begin{abstract} 

We introduce and analyze a class of quantum spin models defined on $d$-dimensional
lattices $\Lambda \subseteq \bZ^d$, which we call {\em Product Vacua with Boundary States} (PVBS).
We characterize their ground state spaces on arbitrary finite volumes and study the thermodynamic limit. 
Using the martingale method, we prove that the models have a gapped excitation spectrum on $\bbZ^d$
except for critical values of the parameters. For special values of the parameters we show that the excitation 
spectrum is gapless. We demonstrate the sensitivity of the spectrum to the existence and orientation of 
boundaries. This sensitivity can be explained by the presence or absence of edge excitations. In particular, we 
study a PVBS models on a slanted half-plane and show that it has gapless edge states but a gapped excitation 
spectrum in the bulk.

\end{abstract}

\maketitle

\date{\today }

\footnotetext[1]{Copyright \copyright\ 2014 by the authors. This
paper may be reproduced, in its entirety, for non-commercial
purposes.}



\section{Introduction}\label{sec:Introduction}


Gapped ground state phases of quantum lattice systems are receiving much renewed
attention lately because they may support topological order, a feature of
interest as a potential path to realizing robust quantum memory
\cite{kitaev:2003}.
Particular effort has been devoted to classifying gapped ground state phases
\cite{Kitaev:2009,Chen:2011iq,Chen:2013,Schuch:2011,Bachmann:2011kw,Hastings:2013wp,Duivenvoorden:2013,Haegeman:2013,Bachmann:2014a}.

In this work we generalize the class of models introduced in
\cite{Bachmann:2012bf}, called {\em Product Vacua with Boundary States} (PVBS),
to higher dimensions and provide a proof for a subclass of these models that
they have a non-vanishing gap above the ground state. These models are quantum spin systems
with nearest neighbor interactions defined on $\bZ^d$. The bulk ground state is
a simple product state (the vacuum) but there is also an interesting
structure of edge states. These models are defined in Section \ref{sec:SetupAndResults}.

PVBS models have in common with many other examples of multidimensional models
with known or expected gapped ground states  that they are frustration-free and
that they can be considered as particularly simple representatives of a `gapped
ground state phase'. Toy models of this type are sometimes referred to in the
literature as Renormalization Group Fixed Points (RGFP).
Well-known examples are the Toric Code model \cite{kitaev:2003} and the models
with Product of Entangled Pairs  (PEPS) ground states studied in
\cite{Perez-Garcia:2010,Cirac:2013, Schuch:2013,Yang:2014}. In contrast to most
of the previously studied models for which a non-vanishing spectral gap is
proved, the interaction terms of the PVBS models are not mutually commuting and
the excitations have a physical, in particular non-flat, dispersion relation.

The focus of the present paper is to prove the non-vanishing spectral gap and
study the nature of the edge states. Section~\ref{sec:GroundStateSpace} carries
out a detailed discussion of the finite volume ground state spaces and of their
limits to various infinite volumes. In particular, we show that the ground state
degeneracy depends on the geometry of the infinite volume and on the presence of
edges or corners. Obtaining rigorous lower bounds for the spectral gap of
multidimensional models with non-commuting interaction terms has proved to be
notoriously difficult. In this work we prove such lower bounds for PVBS models
with one species of particles in $d$-dimensions on infinite volumes that are
obtain as limits of rectangular boxes. We use the martingale method from
\cite{nachtergaele:1996} to accomplish this in Section~\ref{sec:LowerBound}.
In Section \ref{sec:UpperBound} we show that PVBS models defined on specific
infinite volumes, such as the half-infinite plane with a 45 degree boundary,
that have the exact the same interactions as the PVBS models that we prove are
gapped in Section~\ref{sec:LowerBound}, can have a gapless excitation spectrum,
meaning the GNS Hamiltonian has gapless spectrum. It is then also clear that, in
both finite and infinite volumes, the size of the gap above the ground state(s)
depends on the shape of the domain and, specifically, on the orientations of its
edges.

It is important that the ground state and spectral gap for the PVBS models are
stable under small perturbations of the interactions. The essential ingredient
in proving stability for frustration-free models
in~\cite{Michalakis:2013} is the so-called `Local Topological Quantum Order'
condition, which we prove holds for the ground states of the PVBS models in
Section~\ref{sec:Discussion}. We conclude the paper by clarifying the
fundamental difference between two possible ways of closing the gap in PVBS
models. The first, the vanishing of the gap at a critical value of the coupling
(here $\lambda=1$) is a bulk phenomenon. The second involves the geometry of the
boundary of the system and the gap closes due to edge excitations. As we
explain, in the second case the `bulk gap' stays open.


\section{General Setup and Results}\label{sec:SetupAndResults}


\subsection{The $d$-Dimensional PVBS Model}
The class of quantum spin chains introduced in \cite{Bachmann:2012bf}, called
Product Vacua with Boundary States (PVBS), has an $(n+1)$- dimensional state space
at each site for which $n\geq 1$ denotes the number of particle species in the
model, labeled  $i=1,\ldots, n$. At most one particle may occupy a given site at
any time. If a particle of species $i$ occupies site $x$, the state at site $x$
is denoted by $\ket{i}$, and if no particle occupies the site, then the
state is denoted by $\ket{0}$. The Hamiltonian for a chain of $L$ spins is given
by 
$$ H_{[1,L]}=\sum_{x=1}^{L-1} h_{x,x+1}, $$ 
where $h_{x,x+1}$ acts non-trivially on the states of the nearest neighbor pair
at sites $x$ and $x+1$ by a copy of the interaction term $h$. The interaction
$h$ contains both hopping terms for the particles and a repulsive interaction
between two particles of the same species occupying neighboring sites.  
Specifically, $h$ is of the form
\be
\label{pvbs_int} h=\sum_{i=1}^n |\hat{\phi}_i\rangle\langle\hat{\phi}_i| +
\sum_{1\leq i \leq j\leq n}^n |\hat{\phi}_{ij}\rangle\langle\hat{\phi}_{ij}|,
\ee 
where the vectors $\phi_{ij}\in \mathbb{C}^{n+1}\otimes\mathbb{C}^{n+1}$ are
given by 
$$ \phi_{i}=\ket{0,i}-\lambda_i\ket{i,0} \label{phii},\quad
\phi_{ij}=\lambda_i \ket{i,j}-\lambda_j\ket{j,i}
\label{phij},\quad \phi_{ii}=\ket{i,i} $$ 
for $i=1,\ldots, n$ and $i\neq j=1,\ldots,n$. The parameters satisfy $
\lambda_i\geq 0$, for $0\leq i\leq n$, and we put $\lambda_0=1$ for later convenience. 
The notation $\hat{\phi}$ denotes the normalized vector $\Vert \phi\Vert^{-1}\phi$. 

We are interested in defining higher-dimensional PVBS models. To do this we
consider finite connected subsets $\Lambda \subseteq \bZ^d$ as oriented graphs
with edges of the form $(\mbf{x}, \mbf{x}+e_k)$,  where $e_k$, $1 \leq k \leq
d$, are the canonical basis vectors of $\bbZ^d$.
For each oriented edge $(\mbf{x}, \mbf{x}+e_k) $ of a graph $\Lambda$, we then
consider an interaction term of the same form as that of the one-dimensional
model defined in \eq{pvbs_int}:
\be\label{pvbs_ham2}
H_\Lambda= \sum_{\mbf{x}\in\Lambda} \sum_{\substack{k=1, \ldots d \\ \mbf{x}+e_k \in\Lambda}} h^{(k)}_{\mbf{x},\mbf{x}+e_k} 
\ee 
Here, for each direction $k=1,\ldots d$, we may choose different values for the
parameter, say $\lambda_{(i,k)}$, appearing in the PVBS interaction \eq{pvbs_int}. 

As in the one-dimensional case, the ground states space of $H_\Lambda$ is equal
to its kernel and can be explicitly constructed. If $\Lambda$ is connected, has
at least $n$ vertices, and $\lambda_{(i, k)}\in (0,\infty)$ for all $i=1, 
\ldots,n$ and $k=1,\ldots, d$, then the dimension of the ground state space is $2^n$. Under these assumptions, the ground state space
has a basis $\{\psi_{M}^\Lambda\}$ labeled by the subsets
$M\subset\{1,\dots, n\}$.
For $M=\emptyset$, we have 
$$ \psi_\emptyset^\Lambda= \bigotimes_{x\in\Lambda}
\ket{0}. 
$$ 
For $M=\{i_1,\ldots, i_m\}$, the corresponding ground state vector
is given by
\begin{align}
\psi_{M}^\Lambda& = \sum_{\substack{\mbf y_1,\ldots,\mbf y_m\in\Lambda \\ \mbf y_j\neq \mbf y_i \text{ if }i\neq j}}
\prod_{j=1}^m\prod_{k=1}^d
\lambda_{(i_j,k)}^{y_{j_k}} \xi(M,\mbf y_1,\ldots,\mbf y_m)\label{gsvectors} \\
\mbf y_j&=(y_{j_1},\ldots,y_{j_d}),\nonumber
\end{align}
where $\mbf y_1,\ldots,\mbf y_m\in\Lambda$, all $\mbf y_j$
distinct, $\xi(M,\mbf y_1,\ldots,\mbf y_m)$ denotes the unit vector with a particle of
type $i_j$ in position $\mbf y_j$. It is straightforward to check that
$h^{(k)}_{\mbf{x},\mbf{x}+e_k}\psi_{M}^\Lambda=0$, for all pairs
$\{\mbf x,\mbf x+e_k\}\subset \Lambda$. Therefore, the vectors (\ref{gsvectors}) are $2^n$
mutually orthogonal zero-energy ground states of (\ref{pvbs_ham2}).
More details on the ground state space for the one-species case are given in
Section \ref{sec:GroundStateSpace}.

The scope of this paper will be to describe various properties of the
$d$-dimensional PVBS model with a single species of particle. For the single
species model, the local Hilbert space at each site $\mbf{x}$ is then $\bC^2$
and we can simplify the notation and the definition of the Hamiltonian as
follows.
We abbreviate $\lambda_{(1, k)}$ by $\lambda_k$, and write the Hamiltonian on
any finite $\Lambda\subseteq \bZ^d$ as

\begin{equation} \label{eqn:PVBSHam}
H_{\Lambda} = \sum_{k=1}^d\sum_{\mbf{x}, \mbf{x}+e_{k} \in \Lambda}
h_{\mbf{x}, \mbf{x}+e_k}^{(k)}
\end{equation}
where $h_{\mbf{x}, \mbf{x} + e_k}^{(k)}$ acts non-trivially on the states associated
with the sites $\mbf{x}$, $\mbf{x}+e_k$ and is defined by
\begin{align} \label{eqn:PVBSInteraction}
h_{\mbf{x}, \mbf{x}+ e_k}^{(k)} & = \ket{1,1}\bra{1,1} +
\ket{\hat{\phi}_k}\bra{\hat{\phi}_k}, \\
\hat{\phi}_k & = \frac{1}{\sqrt{1 +
\lambda_k^2}}\left(\ket{0,1} - \lambda_k\ket{1, 0} \right).
\end{align}
The Hamiltonian $H_\Lambda$ preserves particle number, a fact that will be
useful throughout the paper.

We define the following orthonormal basis for the Hilbert space of the
one-species model, $\cH_{\Lambda}$.
Let $X \subseteq \Lambda$ where $\Lambda$ is finite. We define
$\xi_X^\Lambda \in \cH_{\Lambda}$ to be the product state vector with
exactly $\vert X \vert$ particles occupying the sites $\mbf{x} \in
X$, namely
\begin{equation}
\xi_{X}^{\Lambda}=\bigotimes_{\mbf{x}\in
\Lambda}\ket{\xi_{X}^{\Lambda}(\mbf{x})} \mbox{ with }\xi_{X}^{\Lambda}(\mbf{x})
= \begin{cases} 1& \mbf{x}\in X \\
0 & \mbf{x} \in \Lambda\setminus X
\end{cases}
\label{gs1}\end{equation}

We refer to the basis $\{\xi_{X}^{\Lambda} : X\subseteq \Lambda\}$ as the
canonical orthonormal basis $\cB_\Lambda$ of $\cH_\Lambda$. Note that contrary to the one-dimensional PVBS models discussed above, there is only one particle species here and the label $X$ refers to their positions.

In the case of a single-species PVBS model defined on a finite connected graph
$\Lambda \subseteq \bZ^d$ with $\lambda_k \in (0, \infty)$ for $k = 1, \ldots,
\, d$, the ground state space is two-dimensional. The normalized ground
state space vectors are given by
\begin{equation}\label{eqn:GSBasis}
\psi_0^{\Lambda} = \xi_{\emptyset}^\Lambda = \bigotimes_{\mbf{x} \in \Lambda}
\ket{0}, \hspace{.25 in}
\psi_1^{\Lambda} = \frac{1}{\sqrt{C_{\Lambda}}}\sum_{\mbf{x} \in
\Lambda}\lambda^{\mbf{x}}
\xi_{\{\mbf{x}\}}^{\Lambda}
\end{equation}
where 
\begin{equation}\label{CLambda}
C(\Lambda) = \sum_{\mbf{x} \in \Lambda}\lambda^{2\mbf{x}},
\end{equation}
with, for any $\mbf{x} = (x_1, \,x_2, \, \ldots, \, x_d)\in\mathbb{Z}^d$,
$\lambda^{\mbf{x}} =\prod_{k=1}^d\lambda_k^{x_k}$. Note that the subscript $0,1$ refers here to whether or not the ground state has a particle.

We briefly comment on the assumption $\lambda_k > 0$. First, if some but not all
of the $\lambda_k =0$, then there exist finite connected $\Lambda \subseteq
\bZ^d$ on which the model has other ground states in addition to the ones given
by (\ref{gs1}).
These additional ground states correspond to `stuck particles' and do not
appear to be of particular interest to us. Second, the model with complex parameters
$\lambda_k\in \mathbb{C}$ is unitarily equivalent to the model with each
parameter replaced by its absolute value. Therefore, we can restrict our attention to the
case of $\lambda_k>0$, for all $k=1,\ldots, d$.

\subsection{Results} 

We extend some of the results for the one-dimensional PVBS models, in particular the lower and 
upper bounds for the spectral gap above the ground states, to the $d$-dimensional models.
We also describe some new phenomena related to gapless edges states that only occur for $d>1$.

We first prove that the ground state space is, in fact, two-dimensional and spanned
by the orthonormal states given in equation \eqref{eqn:GSBasis}.
We then discuss the thermodynamic limit taken over different sequences of finite volumes. 
The nature of the ground states on infinite systems and the spectrum of low-lying excitations 
depends on the choice of the infinite volume, in particular on the presence and the nature
of boundaries. Our main results are the following.

\begin{prop}[Ground States on Finite Volumes]\label{prop:FiniteGSS}
For the one-species PVBS model defined on a finite, connected subset
$\Lambda \subseteq \bZ^d$ as given in \eqref{eqn:PVBSHam}, and
\eqref{eqn:PVBSInteraction} with $\lambda_k \in (0, \infty)$ for all
$k= 1, \ldots, d$, the ground state space $\caG_{\Lambda}$ is given by the
kernel of $H_\Lambda$, is two dimensional, and spanned by
\begin{equation}\label{eqn:FiniteGSB}
\psi_0^{\Lambda} = \bigotimes_{\mbf{x} \in \Lambda} \ket{0}, \hspace{.25
in} 
\psi_1^{\Lambda} = \frac{1}{\sqrt{C(\Lambda)}}\sum_{\mbf{x} \in
\Lambda}\lambda^{\mbf{x}}
\xi_{\{\mbf{x}\}}^{\Lambda}
\end{equation}
where $C(\Lambda)$ is the normalization constant defined in \eq{CLambda}.
\end{prop}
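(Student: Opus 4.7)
The plan is to exploit the nonnegativity of $H_\Lambda$ and its conservation of total particle number to reduce the problem to a sector-by-sector analysis. Each local term $h^{(k)}_{\mbf{x},\mbf{x}+e_k}$ is a sum of two orthogonal rank-one projections, so $H_\Lambda\geq 0$ and the ground state space coincides with $\ker H_\Lambda$. Because every $h^{(k)}$ commutes with the total number operator $\bfN=\sum_{\mbf{x}\in\Lambda}\ket{1}\bra{1}_{\mbf{x}}$, the kernel splits as a direct sum over the $N$-particle sectors of $\caH_\Lambda$, and I analyze each sector separately. A direct computation shows $\psi_0^\Lambda,\psi_1^\Lambda\in\ker H_\Lambda$; it remains to prove that they span the kernel, i.e.\ that the $N=0$ and $N=1$ sectors each contribute a one-dimensional subspace while all sectors with $N\geq 2$ contribute nothing.

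Writing $\psi=\sum_X c_X\xi_X^\Lambda$ and unpacking $h^{(k)}_{\mbf{x},\mbf{x}+e_k}\psi=0$ yields two constraints, valid for every $\Lambda$-edge $(\mbf{x},\mbf{x}+e_k)$: (a) $c_X=0$ whenever $\{\mbf{x},\mbf{x}+e_k\}\subseteq X$ (from the $\ket{1,1}\bra{1,1}$ summand), and (b) $c_{Y\cup\{\mbf{x}+e_k\}}=\lambda_k\,c_{Y\cup\{\mbf{x}\}}$ for every $Y\subseteq\Lambda\setminus\{\mbf{x},\mbf{x}+e_k\}$ (from the $\ket{\hat\phi_k}\bra{\hat\phi_k}$ summand). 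The $N=0$ sector is tautologically one-dimensional. In the $N=1$ sector, (b) reduces to the recursion $c_{\{\mbf{x}+e_k\}}=\lambda_k c_{\{\mbf{x}\}}$; iterating along any $\Lambda$-path from a reference site $\mbf{x}_0$ to $\mbf{y}\in\Lambda$ gives $c_{\{\mbf{y}\}}=\lambda^{\mbf{y}-\mbf{x}_0}c_{\{\mbf{x}_0\}}$, where the factor $\prod_k\lambda_k^{y_k-x_{0,k}}$ depends only on the endpoints, making the recursion consistent around any cycle. Connectivity of $\Lambda$ then forces $c_{\{\mbf{y}\}}\propto\lambda^{\mbf{y}}$, giving the one-dimensional kernel spanned by $\psi_1^\Lambda$.

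The core of the argument is to rule out nonzero vectors in the $N\geq 2$ sectors, via a hopping argument that forces the coefficient of any $\Lambda$-independent $X$ to equal a nonzero multiple of the coefficient of a configuration containing an adjacent pair. Fix $X\subseteq\Lambda$ with $|X|\geq 2$, and assume $X$ is $\Lambda$-independent (otherwise $c_X=0$ by (a)). Choose $\mbf{x}_1,\mbf{x}_2\in X$ realizing the minimum pairwise $\Lambda$-graph distance $d\geq 2$, and fix a shortest $\Lambda$-path $\mbf{p}_0=\mbf{x}_1,\mbf{p}_1,\ldots,\mbf{p}_d=\mbf{x}_2$ connecting them. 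For each interior index $0<i<d$ we have $d_\Lambda(\mbf{x}_1,\mbf{p}_i)=i<d$, so by minimality $\mbf{p}_i\notin X$. Successive applications of (b) to hop the particle originally at $\mbf{p}_d$ back through $\mbf{p}_{d-1},\ldots,\mbf{p}_1$ produce a chain $X=X^{(0)},X^{(1)},\ldots,X^{(d-1)}$ of configurations with $c_{X^{(j+1)}}=\lambda_{k_j}^{\pm 1}\,c_{X^{(j)}}$ at each step, while the terminal configuration $X^{(d-1)}$ contains the $\Lambda$-adjacent pair $\{\mbf{p}_0,\mbf{p}_1\}$ and hence has $c_{X^{(d-1)}}=0$ by (a). Unwinding the chain forces $c_X=0$, proving the $N\geq 2$ sectors are trivial.

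The main obstacle is precisely this last step: ensuring that every hop along the path is legitimate, i.e.\ that each intermediate configuration has at most one particle per site and that (b) is applicable at each step. The minimum-distance choice of $(\mbf{x}_1,\mbf{x}_2)$ together with the shortest-path property handles both issues automatically, since a collision of the moving particle with another element of $X$ at an interior vertex $\mbf{p}_i$ would place that element within $\Lambda$-distance $i<d$ of $\mbf{x}_1$, contradicting the minimality of $d$. Assembling the three sector-wise conclusions yields $\dim\ker H_\Lambda=2$ with the explicit basis $\{\psi_0^\Lambda,\psi_1^\Lambda\}$.
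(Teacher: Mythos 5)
Your proposal is correct and follows essentially the same route as the paper's proof: decomposition into particle-number sectors, the two constraints extracted from the rank-one projections, path-connectivity for uniqueness in the $N=1$ sector, and a hopping argument along a shortest path whose interior avoids $X$ to kill the $N\geq 2$ sectors. The only cosmetic difference is that you minimize the pairwise distance over all of $X$ while the paper minimizes the distance from a fixed $\mbf{x}\in X$ to $X\setminus\{\mbf{x}\}$; both choices guarantee the intermediate configurations are legitimate.
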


We are interested in the ground states defined for the system on infinite
subsets $\Gamma\subseteq \bZ^d$, that can be obtained as weak limits of the
finite volume ground states associated with the vectors $\psi_0^{\Lambda}$ and $
\psi_1^{\Lambda}$ of \eq{eqn:FiniteGSB}.
Such limits are taken over increasing sequences of finite subsets $\Lambda_n
\to \Gamma$. Regardless of the infinite volume $\Gamma$ to which the
sequence $\Lambda_n$ increases, the sequence of functionals
$\omega_{0}^{\Lambda_n}(\cdot) = \langle \psi_{0}^{\Lambda_n} ,
\cdot \,\psi_{0}^{\Lambda_n}\rangle$ will converge to the product state
$\omega_0^{\Gamma}(\cdot)$ such that with respect to its the GNS representation
$(\pi_{0}^{\Gamma}, \, \cH_{0}^{\Gamma}, \, \Omega_0^\Gamma)$, it is given by
the vector state
\begin{equation}\label{eqn:product.g.s}
\omega_0^{\Gamma}(\cdot) =
\braket{\Omega_{0}^\Gamma}{\pi_{\omega_0^\Gamma}(\cdot)\, \Omega_{0}^\Gamma}.
\end{equation}

Depending on $\Gamma$ and the value of the parameters $\lambda_k$, one of two
possible scenarios is realized for the thermodynamic limit of the one-particle
ground state $\omega_{1}^{\Lambda_n}(\cdot) = \langle
\psi_{1}^{\Lambda_n} , \cdot \,\psi_{1}^{\Lambda_n}\rangle$. Either, it also
converges to the product vacuum, in which case $\omega_0^{\Gamma}$ is the unique limiting zero-energy ground state, or it converges to a one-particle ground state  which is realized by a unit vector
$\Omega_1^\Gamma$ also in the GNS space of the product vacuum $\omega_0^\Gamma$.
The state $\Omega_1^\Gamma$ is orthogonal to $\Omega_0^\Gamma$ and
represents the particle in a bound state. We will refer to the first possibility
as {\em Scenario I} and the second as {\em Scenario II.}  An example of
Scenario I is given by $\Gamma=\bZ^d$ in which case the product vacuum is the
unique ground state in the thermodynamic limit.  An explicit expression
of $\Omega_1^\Gamma \in \cH_{0}^{\Gamma}$ in Scenario II
is given by
\begin{equation}\label{Omega_1}
\Omega_1^\Gamma = \frac{1}{\sqrt{C(\Gamma)}} \sum_{\mbf{x} \in \Gamma}
\lambda^{\mbf{x}}
\pi_{\omega_0^{\Gamma}}(\sigma_\mbf{x}^1)\Omega_{0}^{\Gamma}
\end{equation}
where $C(\Gamma) = \sum_{\mbf{x} \in \Gamma} \lambda^{2\mbf{x}}$ and
$\sigma_\mbf{x}^1$ is the first Pauli matrix.

From the expression for $\psi_1^{\Lambda}$, it is clear that the probability
amplitude for the particle in the one-particle ground state is concentrated on
the sites $\mbf x$ in $\Lambda$ where $\lambda^{\mbf{x}}$ is maximized. If
$\Lambda$ is a rectangular box and all $\lambda_k\neq 1$, then the particle will concentrate in
a corner. It is then easy to see that an example of Scenario II is realized if
all $\lambda_k\in (0,1)$, and the thermodynamic limit is taken over an
increasing sequence of hypercubes of the form $\Lambda_n =[0,n]^d$. The
thermodynamic limit of the one-particle ground states then shows a particle in a
bound state concentrated at the origin of $\Gamma=[0,\infty)^d\subset\bZ^d$.
Which scenario, I or II, that occurs for a given infinite volume $\Gamma$ and
set of model parameters is summarized by the following proposition.

\begin{prop}\label{prop:GS_Scenarios}
Let $\Gamma$ be an arbitrary infinite lattice and let $(\Lambda_n)_{n\in\bbN}$ be a sequence of increasing and absorbing finite sets converging to $\Gamma$. Then, in the weak-* topology,
\begin{enumerate}
\item (Scenario I): If $\lim_{n\to\infty}C(\Lambda_n) = +\infty$, then
$\omega^{\Lambda_n}_{1} \to \omega^\Gamma_{0}$,
\item (Scenario II): If $\lim_{n\to\infty}C(\Lambda_n) < +\infty$, then
$\Omega_1^\Gamma \in \cH_{0}^{\Gamma}$,
$\left\langle\Omega_1^\Gamma,\Omega_0^\Gamma\right\rangle = 0$ and
$\omega^{\Lambda_n}_{1} \to \omega^\Gamma_{1}$ where
\begin{equation*}
\omega^\Gamma_{1}(\cdot) = \left\langle \Omega_1^\Gamma,
\pi_{0}^{\Gamma}(\cdot)\, \Omega_1^\Gamma \right\rangle.
\end{equation*}
\end{enumerate}
\end{prop}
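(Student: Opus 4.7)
The strategy is to reduce everything to computing $\omega_1^{\Lambda_n}(A)$ on local observables $A$: since each $\omega_1^{\Lambda_n}$ is a state, hence of norm $1$, and local observables are norm-dense in the quasi-local algebra $\caA(\Gamma)$, weak-* convergence is equivalent to pointwise convergence on a local $A$ supported on a fixed finite $X\subset\Gamma$. The absorbing property of $(\Lambda_n)$ ensures $X\subset\Lambda_n$ for all sufficiently large $n$, so one may work within the tensor factorization $\cH_{\Lambda_n}=\cH_X\otimes\cH_{\Lambda_n\setminus X}$ and exploit the product form of both the canonical basis vectors $\xi_{\{\mbf x\}}^{\Lambda_n}$ and of $\omega_0^\Gamma$.

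Inserting the explicit expression for $\psi_1^{\Lambda_n}$ produces a double sum over $\mbf x,\mbf y\in\Lambda_n$, which I would split into four blocks according to whether each of $\mbf x,\mbf y$ lies in $X$ or in $\Lambda_n\setminus X$. The two ``mixed'' blocks (exactly one index in $X$) vanish: $A$ acts as the identity on $\Lambda_n\setminus X$, and the remaining inner product $\braket{\xi_\emptyset^{\Lambda_n\setminus X}}{\xi_{\{\mbf z\}}^{\Lambda_n\setminus X}}$ is zero for $\mbf z\in\Lambda_n\setminus X$. The block with both indices outside $X$ collapses to the diagonal $\mbf x=\mbf y$ and contributes $\omega_0^\Gamma(A)\,[C(\Lambda_n)-C(X)]/C(\Lambda_n)$, while the block with both indices inside $X$ is a finite, $n$-independent quantity scaled by $1/C(\Lambda_n)$. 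In Scenario I, $C(\Lambda_n)\to\infty$ kills the second block and drives the first to $\omega_0^\Gamma(A)$, yielding the claim. In Scenario II, both blocks admit finite limits obtained simply by replacing $C(\Lambda_n)$ by $C(\Gamma)$.

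To match this Scenario II limit with $\braket{\Omega_1^\Gamma}{\pi_0^\Gamma(A)\,\Omega_1^\Gamma}$, I would first verify that the series defining $\Omega_1^\Gamma$ is Cauchy in $\cH_0^\Gamma$: the orthonormality relation $\braket{\pi_0^\Gamma(\sigma_\mbf x^1)\Omega_0^\Gamma}{\pi_0^\Gamma(\sigma_\mbf y^1)\Omega_0^\Gamma}=\omega_0^\Gamma(\sigma_\mbf x^1\sigma_\mbf y^1)=\delta_{\mbf x,\mbf y}$ reduces the squared norm of a tail to $C(\Gamma)^{-1}\sum_{\mbf x\notin\Lambda_n}\lambda^{2\mbf x}$, which vanishes precisely because $C(\Gamma)<\infty$. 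The same identity yields $\|\Omega_1^\Gamma\|=1$, and orthogonality to $\Omega_0^\Gamma$ is immediate from $\omega_0^\Gamma(\sigma_\mbf x^1)=0$. Applying the same four-block decomposition to $\braket{\Omega_1^\Gamma}{\pi_0^\Gamma(A)\Omega_1^\Gamma}=C(\Gamma)^{-1}\sum_{\mbf x,\mbf y\in\Gamma}\lambda^{\mbf x+\mbf y}\omega_0^\Gamma(\sigma_\mbf x^1 A\,\sigma_\mbf y^1)$, together with $\omega_0^\Gamma(\sigma_\mbf z^1)=0$ for $\mbf z\notin X$ and $\sigma_\mbf x^1\xi_\emptyset^X=\xi_{\{\mbf x\}}^X$, reproduces exactly the limit of $\omega_1^{\Lambda_n}(A)$ computed above. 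The only mildly technical step is this GNS-level convergence of $\Omega_1^\Gamma$; everything else is careful bookkeeping controlled by $C(X)$, $C(\Lambda_n)$, and $C(\Gamma)$.
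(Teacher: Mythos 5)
Your proposal is correct and follows essentially the same route as the paper's proof: evaluate $\omega_1^{\Lambda_n}(A)$ on a local observable $A$ with support $X$, observe that the mixed terms vanish and the off-support double sum collapses to the diagonal contribution $C(\Lambda_n\setminus X)\,\omega_0^\Gamma(A)/C(\Lambda_n)$, and then let $C(\Lambda_n)$ diverge (Scenario I) or converge (Scenario II), in the latter case identifying the limit with the vector state of $\Omega_1^\Gamma$ via $\xi_{\{\mbf x\}}=\sigma_{\mbf x}^1\xi_\emptyset$ and the orthonormality of the vectors $\pi_0^\Gamma(\sigma_{\mbf x}^1)\Omega_0^\Gamma$. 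Your explicit Cauchy-sequence verification for the series defining $\Omega_1^\Gamma$ is a slightly more careful rendering of the paper's norm estimate, but it is the same idea.
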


The {\em excitation spectrum} of the model defined on an infinite volume
$\Gamma\subset\bZ^d$ with respect to a ground state $\omega$ is defined as the
spectrum of the GNS Hamiltonian, $H_\omega$, i.e., the densely defined
self-adjoint operator on the GNS Hilbert space $\cH_\omega$, that generates the
dynamics of the infinite system and is such that $H_\omega \Omega_\omega =0$. We
say that there is a {\em spectral gap} above the ground state $\omega$ if there
exists a $\delta>0$ such that
\begin{equation*}
\spec(H_\omega)\cap(0,\delta)=\emptyset.
\end{equation*}
In this situation the {\em spectral gap}, $\gamma$, is defined by
 \begin{equation} \label{Def_Gap}
 \gamma_\omega=\sup\{\delta>0\,:\,\text{spec}(H_\omega)\cap(0,\delta)=\emptyset\}.
 \end{equation}
If such $\delta>0$ does not exists we put $\gamma=0$ and say that the ground
state has gapless excitations.

We will show that any infinite volume obtained as a limit of
increasing rectangular boxes will have a nonzero gap in the associated excitation spectrum
given that $\lambda_k \neq 1$ for all $k = 1, \, \ldots, \, d$. Specifically,
this indicates that both $\Gamma= [0, \infty)^d$ and $\Gamma = \bZ^d$ will have
gapped excitation spectrum given that no $\lambda_k =1$. Hence, a gapped
excitation spectrum can arise for models belonging to both scenarios I and II. We believe
the excitation spectrum is gapped for any model categorized by Scenario II.
However, we present other cases of Scenario I
that have a gapless excitation spectrum above the ground state even if
$\lambda_k\neq 1$, for all $k=1,\ldots,d$. For example, the
thermodynamic limit of diamond shaped $\Lambda_n\subset \bZ^2$ with $\lambda_1=\lambda_2\in (0,1]$
discussed in Section \ref{sec:UpperBound}, has gapless excitations above the
unique ground state.

In the following proposition, which we prove in sections \ref{sec:LowerBound}
and \ref{sec:UpperBound}, the key tools to studying the spectral gap are the
upper and lower bounds for the system on rectangular boxes of the form
$\Gamma_\mbf{N}=[0,N_1]\times\cdots\times [0,N_d]\subset \bZ^d$, $N_1,\ldots,N_d\geq
1$. Let $\gamma(\Gamma)$ denote the spectral gap of the system defined on $\Gamma$, namely the spectral gap of the Hamiltonian \eq{eqn:PVBSHam} for finite systems and that of the GNS Hamiltonian $H_0^\Gamma$ of the corresponding vacuum state $\omega_0^\Gamma$ in the case of an infinite system. Let $B_d$ be the $d$-dimensional unit
hypercube in $\bZ^d$.

\begin{thm}[Bounds for the Spectral Gap]\label{thm:SpectralGap}\hfill\\
\noindent
For the PVBS model with local Hamiltonians defined in \eq{eqn:PVBSHam}, and
$\lambda_k\in (0,\infty), k=1,\ldots,d$, we have the following bounds on the
spectral gap:\\
\noindent
(i) For any finite rectangular solid $\Gamma=\Gamma_\mbf{N}$, or any infinite
$\Gamma\subset\bZ^d$ that can be obtained as the limit of a sequence of
increasing rectangular solids (i.e., translations of finite volumes $\Gamma_\mbf{N}$), we have the following lower bound:
\begin{equation}
\gamma(\Gamma)\geq \frac{\gamma(B_d)}{2^d}\prod_{k=1}^d(1 - \epsilon(\lambda_k)\sqrt{2})^2 
\label{lowerbounds}\end{equation}
where
\begin{equation}
\epsilon(\lambda_k)=\left\{\begin{matrix} 
							\frac{\lambda_k}{\sqrt{1+\lambda_k^2}}  &  \lambda_k < 1  \\
							\frac{1}{\sqrt{1+\lambda_k^2}}          &  \lambda_k > 1
                        \end{matrix}\right. 
\label{eqn:epsilon}
\end{equation}
\noindent
(ii) For the model on $\bZ^d$, the gap of the GNS Hamiltonian satisfies the
upper bound
\begin{equation}
 \gamma(\bZ^d) 
\leq \sum_{k:\lambda_k\neq 1}\frac{(1 -
\lambda_k)^2}{1 + \lambda_k^2}.
\end{equation}
In particular, it gapless if $\lambda_k = 1$ for all $k$.

\noindent
\end{thm}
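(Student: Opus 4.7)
The plan is to treat the two inequalities separately: apply the martingale method of \cite{nachtergaele:1996} for the lower bound in (i), and use a variational calculation in the GNS Hilbert space of the product vacuum for the upper bound in (ii).

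For part (i), I would first establish the bound on a finite rectangular solid $\Gamma_\mbf{N}$ and then extend to any infinite $\Gamma$ obtained as an increasing limit of such solids, using the fact that the local inequality $H_{\Gamma_\mbf{N}}\geq\gamma\,(1-G_{\Gamma_\mbf{N}})$ passes to the weak-$*$ limit in the GNS representation of $\omega_0^\Gamma$. To set up the martingale, I would exhaust $\Gamma_\mbf{N}$ by an increasing sequence of rectangular sub-volumes $\Lambda_1\subset\Lambda_2\subset\cdots\subset\Lambda_M=\Gamma_\mbf{N}$, in which each $\Lambda_{n+1}$ is built from $\Lambda_n$ by appending a single unit slab along one coordinate direction, starting from a translate of the unit hypercube $B_d$. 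Since every newly-added slab contains a translate of $B_d$, the local gap at each step is at least $\gamma(B_d)$, and the martingale method reduces the global lower bound to an estimate on the overlap of the ground state projectors $G_n,G_{n+1}$ onto $\ker H_{\Lambda_n},\ker H_{\Lambda_{n+1}}$. Because $\psi_0^{\Lambda}$ is the same product vacuum at every scale, the non-trivial part of the overlap involves only the one-particle ground states \eqref{eqn:FiniteGSB}, where it reduces to a ratio of partition sums $C(\Lambda_n)/C(\Lambda_{n+1})$; a direct calculation bounds this ratio by $\epsilon(\lambda_k)^2$ for the direction $k$ along which the slab was appended, with the two cases in \eqref{eqn:epsilon} reflecting which end of $\Lambda_n$ the slab is attached to (the side must be chosen so that the amplitude of $\psi_1^{\Lambda_n}$ remains concentrated away from the newly-added sites). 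Iterating the martingale inequality yields the claimed bound \eqref{lowerbounds}, with the base gap $\gamma(B_d)$, one factor $(1-\sqrt{2}\,\epsilon(\lambda_k))^2$ per direction, and the combinatorial prefactor $2^{-d}$ from the overlap multiplicity of unit hypercubes in the slab decomposition. The main obstacle lies precisely in this overlap estimate, which requires careful case analysis in the regimes $\lambda_k<1$ vs.\ $\lambda_k>1$ and a judicious choice of which side of $\Lambda_n$ to extend at each step.

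For part (ii), I would identify an explicit closed invariant subspace of $\cH_0^{\bZ^d}$ on which the GNS Hamiltonian acts as a solvable hopping operator. Because $\omega_0^{\bZ^d}$ is the pure product state $\bigotimes_\mbf{x}\bra{0}\cdot\ket{0}$, the vectors $\{\pi_0(\sigma_\mbf{x}^1)\Omega_0^{\bZ^d}:\mbf{x}\in\bZ^d\}$ form an orthonormal family spanning such a subspace $\cH_1$, since the local interactions \eqref{eqn:PVBSInteraction} conserve the number of $\ket{1}$-particles. Computing matrix elements shows that $H_0|_{\cH_1}$ is a translation-invariant nearest-neighbor hopping operator with on-site energy $d$ (each interaction $h^{(k)}_{\mbf{x},\mbf{x}+e_k}$ contributes $1/(1+\lambda_k^2)$ or $\lambda_k^2/(1+\lambda_k^2)$ to the diagonal depending on which end the particle sits at, summing to $1$ per direction) and hopping amplitude $-\lambda_k/(1+\lambda_k^2)$ in direction $k$. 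Fourier-transforming on $\bZ^d$ yields the dispersion
\begin{equation*}
E(\mbf{k})=\sum_{k=1}^{d}\left(1-\frac{2\lambda_k}{1+\lambda_k^2}\cos k_k\right),
\end{equation*}
whose infimum is attained at $\mbf{k}=0$ and equals $\sum_k(1-\lambda_k)^2/(1+\lambda_k^2)$. Since $\mathrm{spec}(H_0|_{\cH_1})\subseteq\mathrm{spec}(H_0)$ and $\Omega_0^{\bZ^d}\perp\cH_1$, this infimum upper-bounds $\gamma(\bZ^d)$, and vanishes exactly when all $\lambda_k=1$. This step is essentially computational once the one-particle sector has been identified.
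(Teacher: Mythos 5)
Your plan for part (ii) is correct but takes a genuinely different route from the paper. The paper argues purely variationally: it evaluates the energy of $\pi_0^{\bZ^d}(A_\Lambda(\mbf z))\Omega_0^{\bZ^d}$ with $A_\Lambda(\mbf z)=\sum_{\mbf x\in\Lambda}\mbf z^{\mbf x}\sigma^1_{\mbf x}$ on growing centered boxes, sets $z_k=1$, and checks that the boundary contributions vanish as $\bfN\to\infty$, leaving exactly the bulk energy $\sum_{k:\lambda_k\neq1}(1-\lambda_k)^2/(1+\lambda_k^2)$. You instead diagonalize the GNS Hamiltonian exactly on the closed invariant one-particle subspace; your matrix elements (on-site energy $d$, hopping $-\lambda_k/(1+\lambda_k^2)$) and the resulting infimum of the dispersion at $\mbf k=0$ are all correct, and the conclusion follows because the spectrum of the restriction to a closed invariant subspace orthogonal to $\Omega_0^{\bZ^d}$ is contained in $\spec(H_0^{\bZ^d})$. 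This yields the full one-particle band rather than a single trial energy, so it is, if anything, more informative; the paper's trial state with $z_k=1$ is just the truncated $\mbf k=0$ plane wave.

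Part (i) contains a genuine gap. You exhaust $\Gamma_\bfN$ in a single pass by appending unit slabs and assert that ``since every newly-added slab contains a translate of $B_d$, the local gap at each step is at least $\gamma(B_d)$.'' This is false as a general principle: the spectral gap of a frustration-free Hamiltonian is not monotone under inclusion of volumes (if it were, no martingale method would be needed at all, since every $\Gamma_\bfN$ contains a translate of $B_d$). In the martingale method with $l=2$, condition (2) requires a lower bound on the gap of $H_{\Lambda_{n+1}\setminus\Lambda_{n-1}}$, and when $\Lambda_n=\prod_{i=1}^{d-1}[0,N_i]\times[0,n]$ this difference set is a translate of the full thickness-two slab $\Gamma_\bfN^{(1)}=\prod_{i=1}^{d-1}[0,N_i]\times[0,1]$, whose gap is precisely what you do not yet know. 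The paper resolves this by a recursion over the dimension: the martingale method applied in direction $d$ gives $\gamma(\Gamma_\bfN)\geq\tfrac12\gamma(\Gamma_\bfN^{(1)})(1-\epsilon(\lambda_d)\sqrt2)^2$, and the same argument is then applied to $\Gamma_\bfN^{(1)}$ in direction $d-1$, and so on down to $B_d$; this is also the true origin of the factor $2^{-d}$ (the constant $d_l=2$ appears once per coordinate direction), not an ``overlap multiplicity of unit hypercubes.'' Your overlap estimate is otherwise in the right spirit: the quantity to bound is $\lambda_d^2\,c(\lambda_d,n-1)/\bigl((1+\lambda_d^2)\,c(\lambda_d,n)\bigr)$ with $c(\lambda,m)=\sum_{i=0}^m\lambda^{2i}$, and the two branches of $\epsilon(\lambda_k)$ arise from bounding the ratio $c(\lambda_d,n-1)/c(\lambda_d,n)$ in the regimes $\lambda_d<1$ and $\lambda_d>1$; the volume always grows toward the same end, so no choice of attachment side is required.
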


Note that there is a discrepancy between the upper and lower bounds of the theorem, namely the lower bound is equal to zero as soon as \emph{one} of the model parameters is equal to $1$, while the upper bound vanishes only if \emph{all} of them equal $1$. This is due to the fact that the lower bound is a finite volume calculation while the upper bound is a true infinite volume statement. While in finite volume, there is a zero-energy state in the one-particle subspace, this is no longer true in the GNS Hilbert space of the model on $\bZ^d$. Observe that the restriction of the Hamiltonian for finite or infinite rectangular solids to the one-particle sector is separable in the sense that it is a sum of terms each of which acts on one of the particle's coordinates only. Hence, it is possible in finite volume to construct a vector in the one-particle space which is a ground state of the terms corresponding to all but one coordinate directions, but such a vector does not exist in the GNS space for the model on $\bZ^d$, and any variational state in the one-particle sector there will have an energy which is at least the sum of all one-dimensional gaps. This suggests that the GNS Hamiltonian on $\bZ^d$  is gapless only if all model parameters are equal to one.

The upper bounds in this proposition are proved in the usual way by constructing
suitable variational states. For the lower bounds we use the martingale method
of \cite{nachtergaele:1996}. The constant $\gamma(B_d)$ is positive by
definition. For example, when $d=2$,
\begin{equation*}
\gamma(B_2)= 2-\sqrt{1+\frac{4\lambda_1\lambda_2}{(1+\lambda_1^2)(1+\lambda_2^2)}}\,.
\end{equation*}

The model with one species of particles we study here is equivalent to an XY
model with a particular choice of magnetic field. The same analysis applies to
the following generalization of XXZ type as long as the additional parameter
$\Delta>-1$. With the same notations as in \eqref{eqn:PVBSInteraction}, the
generalization is defined by
\begin{align} 
H_{\Lambda}(\Delta)  = &\sum_{k=1}^d\sum_{\mbf x, \mbf x +e_k}h_{\mbf x, \mbf x
+e_k}^{(k)}(\Delta) \\
 h_{\mbf{x}, \mbf{x}+ e_k}^{(k)}(\Delta)  =&
(1+\Delta)\ket{1,1}\bra{1,1} + \ket{\hat{\phi}_k}\bra{\hat{\phi}_k}, \\
= &- \frac{\lambda_k}{1+\lambda_k^2}(S^+\otimes S^- +
S^-\otimes S^+) + \Delta S^3 \otimes S^3 \nonumber \\
&   - \left(\frac{\lambda_k^2}{1+\lambda_k^2}
+\frac{\Delta}{2}\right)S^3 \otimes \idtyty  - \left(\frac{1}{1+\lambda_k^2}
+\frac{\Delta}{2}\right)\idtyty \otimes S^3 \nonumber \\ &  +
\frac{1}{2}\left(1 +\frac{\Delta}{2}\right)\idtyty \otimes \idtyty. \nonumber
\end{align}

The ground state space for this generalization is the same as the ground state
space for the one species, $d$-dimensional PVBS model as long as $\Delta > -1$.
As the martingale method is a calculation on the ground state space of a model,
it follows that the spectral gap claim from Theorem \ref{thm:SpectralGap}(i)
also holds for this more general case with the slight modification that the
constant $\gamma(B_d)$ is the spectral gap of the generalized Hamilonian
$H_{B_d}(\Delta)$.

We also show that the presence of edge states can cause the spectrum in the
thermodynamic limit to be gapless. In Section \ref{sec:UpperBound} we
consider the PVBS models on $D_\infty = \{(x,y)\,: \, y\geq -x \}\subseteq \bZ^2$
with $\lambda_1 = \lambda_2 \in (0,1)$ for which Scenario I is realized, and we
prove that the excitation spectrum of the GNS Hamiltonian is gapless. Under the
latter condition,  a particle can `bind' to one of the 45 degree boundaries
while remaining delocalized on the boundary. This gives rise to a gapless band
of edge states.

Specifically, we consider a sequence $\phi_L\in\cH_{0}^{D_\infty}$ of one-particle states in the GNS Hilbert space for which the particle is confined
to a finite diamond
$D_L \subset\Gamma$, see Figure \ref{figure:Diamond} in Section
\ref{sec:UpperBound}. We show that that the energy $E_L = \|\phi_L\|^{-1}\langle
\phi_L, \, H_0^{D_\infty} \phi_L\rangle$ is of the order
$L^{-2}$, showing that the excitation spectrum of the GNS Hamiltonian is gapless in this case.

\begin{prop} \label{prop:Diamonds}
Let $d = 2$, and $\lambda_1 = \lambda = \lambda_2$ where $\lambda \in (0,1)$.
Then the GNS Hamiltonian of the one-species PVBS model is gapless on $D_\infty = \{(x,y) \, : \, y \geq
-x \}$.
\end{prop}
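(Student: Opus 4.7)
The plan is to construct a sequence of one-particle variational vectors $\phi_L\in\cH_0^{D_\infty}$ orthogonal to the vacuum $\Omega_0^{D_\infty}$ that satisfy $\langle\phi_L,H_0^{D_\infty}\phi_L\rangle/\|\phi_L\|^2 = O(L^{-2})$; by the variational principle this forces $0$ to be an accumulation point of $\spec(H_0^{D_\infty})\vert_{\{\Omega_0^{D_\infty}\}^\perp}$, hence $H_0^{D_\infty}$ is gapless. The guiding physical picture is that for $\lambda<1$ the finite-volume one-particle amplitudes $\lambda^{x+y}$ concentrate on the boundary line $\{x+y=0\}$, on which they do not decay, so the edge should support a nearly free one-dimensional band of excitations. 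Observe that $C(D_\infty)=+\infty$ because already $\sum_{x+y=0}\lambda^{2(x+y)}$ diverges; by Proposition~\ref{prop:GS_Scenarios} we are in Scenario~I, and the product vacuum is the unique ground state.

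First I would parametrize the one-particle sector by finitely supported $\alpha:D_\infty\to\bC$ through $\phi_\alpha=\sum_\mbf{x}\alpha_\mbf{x}\pi_0^{D_\infty}(\sigma_\mbf{x}^1)\Omega_0^{D_\infty}$. Since $\omega_0^{D_\infty}$ is a product state, $\phi_\alpha\perp\Omega_0^{D_\infty}$ and $\|\phi_\alpha\|^2=\sum_\mbf{x}|\alpha_\mbf{x}|^2$. Because the projector $\ket{1,1}\bra{1,1}$ annihilates one-particle states and $H_0^{D_\infty}$ preserves particle number, a direct computation using the explicit form of $\hat\phi_k$ gives
\begin{equation*}
\langle\phi_\alpha,H_0^{D_\infty}\phi_\alpha\rangle=\sum_{k=1}^{2}\sum_{\{\mbf{x},\mbf{x}+e_k\}\subset D_\infty}\frac{|\alpha_{\mbf{x}+e_k}-\lambda\,\alpha_\mbf{x}|^2}{1+\lambda^2}.
\end{equation*}
This identity is derived in any finite rectangle containing $\supp\alpha$ and transfers unchanged to the GNS level by locality.

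Next I would change coordinates to $u=x+y$, $v=x-y$, so that $D_\infty=\{u\geq 0\}$ with sublattice constraint $u+v\in 2\bZ$; the $e_1$-bond sends $(u,v)\mapsto(u+1,v+1)$ and the $e_2$-bond sends $(u,v)\mapsto(u+1,v-1)$. The edge-wave ansatz I propose is $\alpha_{(x,y)}=\lambda^u g_L(v)\,\chi_{\{0\leq u\leq L\}}$, where $g_L$ is a slowly-varying bump of width $L$ supported on $|v|\leq L$, for instance $g_L(v)=\cos(\pi v/(2L+2))$. The key observation is that for the untruncated state, $\alpha_{\mbf{x}+e_k}-\lambda\alpha_\mbf{x}=\lambda^{u+1}(g_L(v\pm1)-g_L(v))$, so every hopping term is weighted by $\lambda^{2(u+1)}$ times a finite-difference derivative of $g_L$; the cutoff at $u=L$ contributes only $O(\lambda^{2L})$ boundary corrections. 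Summing the geometric series in $u$ and using $|g_L(v+1)-g_L(v)|=O(1/L)$ on a support of cardinality $O(L)$ will yield $\langle\phi_L,H_0^{D_\infty}\phi_L\rangle=O(L^{-1})$ while $\|\phi_L\|^2=\Theta(L)$, hence a Rayleigh quotient of order $L^{-2}$. Conceptually, this is just the usual exhibition of the lowest eigenfunction of a discrete Laplacian on an interval of length $L$.

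No single step is expected to present a serious obstacle. The two points that will require care are (a) verifying that the finite-volume energy formula transfers to the GNS Hamiltonian, which is immediate because $\alpha$ has compact support and the interaction is local, and (b) accounting for the $(u,v)$-sublattice parity, which affects only numerical constants and not the $L^{-2}$ scaling. The real conceptual input is the separation-of-variables ansatz $\lambda^u g_L(v)$ itself, which isolates an exponentially bound normal mode from a slow tangential mode along the edge; once this is chosen, the rest of the argument reduces to a standard variational estimate.
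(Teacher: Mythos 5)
Your proposal is correct and follows essentially the same route as the paper: a one-particle variational state in the GNS space of the form $\lambda^{x+y}\cdot(\text{slow trigonometric bump in }x-y)$ supported on a finite diamond, with the finite-volume energy identity transferred to the GNS Hamiltonian by locality and frustration-freeness, yielding a Rayleigh quotient of order $L^{-2}$. The paper's choice is $\sin(2\pi(x-y)/L)$ on $|x-y|\le L/2$ (which vanishes exactly on the lateral boundary) rather than your cosine bump, but this is a cosmetic difference that does not affect the $L^{-2}$ scaling.
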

 Due to the symmetry of the model, the spectrum remains unchanged if
$\lambda_k\to \lambda_k^{-1}$ and $x_k\to -x_k$. Therefore, similar statements
about the ground states and the excitation spectrum can also be made for the
model defined on suitable lattices with coupling constants in $(1,\infty)$.


\section{Ground State Space}\label{sec:GroundStateSpace}


Since we consider the one species class of models, the Hilbert space at any site
$\mbf x\in\bbZ^d$ is a copy of $\bbC^2$. Let $\Lambda$ be any finite connected subset
of $\bZ^d.$ For $0\leq N\leq \vert \Lambda\vert$, let
\begin{equation*}
\caH^N_\Lambda:=\mathrm{span}\left\{\psi_X^\Lambda : \vert X\vert = N\right\},
\end{equation*}
which we refer to as the $N$-particle subspace. Clearly, $\caH_\Lambda =
\oplus_{0\leq N\leq \vert \Lambda\vert}\caH^N_\Lambda$. The
interactions defined in \eq{eqn:PVBSInteraction} are such that
$H_\Lambda\caH^N_\Lambda \subset \caH^N_\Lambda$. Thus, the Hamiltonian
preserves particle number.

By the definition of the Hamiltonian, $H_\Lambda \xi_\emptyset^\Lambda=0$.
Since the Hamiltonian $H_\Lambda$ is a sum is a positive semi-definite
operators, $H_\Lambda\geq 0$, the ground state space $\caG_\Lambda$ is
given by its non-empty kernel
\begin{equation*}
\ker(H_\Lambda)=\hskip -8pt \bigcap_{\substack{k = 1, \ldots, d \\
\mbf{x}, \mbf{x} + e_k \in \Lambda}}\hskip -8pt \ker(h^{(k)}_{\mbf{x},\,
\mbf{x}+e_k}).
\end{equation*}
In other words, 
\begin{equation}\label{bondequations}
\psi\in \mathcal{G}_\Lambda \mbox{ iff }  h^{(k)}_{\mbf{x},\,
\mbf{x}+e_k}\psi=0 \mbox{ for all } \mbf{x}, k \mbox{ such that } \{\mbf{x},
\mbf{x} + e_k \}\subset \Lambda.
\end{equation}

\begin{proof}[Proof of Proposition \ref{prop:FiniteGSS}] 

Since the Hamiltonian preserves particle number we can just look for solutions
of \eq{bondequations} in each N-particle subspaces $\caH^N_\Lambda$ separately.

(i) $N=0$. $\caH^0_\Lambda=\{\mathbb{C}\xi_\emptyset^\Lambda\}$ is
one-dimensional and we have already seen that $\psi^\Lambda_0 =
\xi_\emptyset^\Lambda$ is a ground state.

(ii) $N=1$.
Any $\psi\in\caH^1_\Lambda$ has an orthogonal expansion of the form
\begin{equation*}
\psi = \sum_{\mbf{x}\in \Lambda}a_{\mbf{x}}\xi_{\{\mbf x\}}^\Lambda
\end{equation*}
and for this case, the equations given in \eq{bondequations} are equivalent to
the following equations for the coefficients $a_\mbf{x}$:
\begin{equation}\label{coefficientequations}
a_{\mbf{x}+e_k}=\lambda_k a_\mbf{x},\
 \mbox{ for all } \mbf{x}, k \mbox{ such that } \{\mbf{x}, \mbf{x} + e_k \}\subset \Lambda.
\end{equation}
The equations
\begin{equation}
 a_{\mbf{x}} =  \mbf{\lambda}^{\mbf{x}}
\hspace{.5 cm} \text{where} \hspace{.5 cm} \lambda^{\mbf{x}} = \prod_{k = 1}^d
\lambda_k^{x_k},
\end{equation}
are a non-zero solution to \eqref{coefficientequations}. We argue that
this is the only linearly independent solution in $\caH_\Lambda^1$. Since
$\Lambda$ is connected, for any two distinct $\mbf{x}, \mbf{y}\in\Lambda$, 
there is a path $\mbf{x} = \mbf{x}_0 \to \mbf{x}_1 \to \ldots \to \mbf{x}_n =
\mbf{y}$ contained in $\Lambda$ such that for all $i = 0, \, \ldots, \, n-1$,
$\mbf{x}_{i+1} = \mbf{x}_i + p_i e_{k_i}$ where $p_i \in \{ \pm 1 \}$ and $k_i
\in \{ 1, \, \ldots, \, d\}$. This implies that $\mbf{y-x} = (n_1, \, n_2, \,
\ldots, \, n_d)$ where $n_k = \sum_{i, \, k_i = k} p_i$ and applying
\eqref{coefficientequations} shows 
$$
 a_{\mbf{y}} =  \mbf{\lambda}^{\mbf{y-x}} a_{\mbf{x}} .
 $$
Hence, the solution of \eq{coefficientequations} is unique up to a
multiplicative constant.

(iii) $N\geq 2$.
Any $\psi\in\caH^N_\Lambda$ has an expansion of the form
\begin{equation*}
\psi = \sum_{X \subset\Lambda, \vert X\vert =N}a_{X}\xi_X^\Lambda .
\end{equation*}
Due to the first term in the general interaction defined in
\eq{eqn:PVBSInteraction}, the equations \eq{bondequations} imply that $a_X=0$
whenever $X$ contains a nearest neighbor pair. In particular, for a connected
$\Lambda$ with at least two sites, it follows that there are no non-zero
solutions with $N=\vert \Lambda\vert$. Consider any $X$ a subset of $\Lambda$
such that it contains no nearest neighbor pairs and $|X|=N<\vert\Lambda\vert$.
Fix $\mbf{x}\in X$ and let $\mathbf{x'}\in\Lambda\setminus X$ be a nearest
neighbor of $\mathbf{x}$, i.e., $\mbf{x'}=\mbf{x}+p e_k$, for some
$p\in\{-1,+1\}$ and $k\in \{1,\ldots,d\}$.
Let
$\tilde{X}$ denote the set obtained from $X$ by replacing $\mathbf{x}$ with
$\mathbf{x'}$.
The equations \eq{bondequations} then imply
\begin{equation}\label{Xeq}
a_{\tilde{X}}=\lambda_k^p a_X.
\end{equation}
Since $\vert X\vert \geq 2$ and $\Lambda$ is connected, for any pair of distinct
sites $\mbf{x}, \mbf{y}\in X$ there is a path $\mbf{x} =\mbf{x}_0 \to
\mbf{x}_1 \to \ldots \to \mbf{x}_n = \mbf{y}$ contained in $\Lambda$ such that
for all $i
= 0, \, \ldots, \, n-1$, $\mbf{x}_{i+1} = \mbf{x}_i + p_i e_{k_i}$ with $p_i
\in \{ \pm 1 \}$ and $k_i \in \{ 1, \, \ldots, \, d\}$. For any $\mbf{x}\in
X$, pick a $\mbf{y}\in X$ such that there is connecting path in $\Lambda$ of
shortest length among all paths in $\Lambda$ connecting $\mbf{x}$ to a site of
$X\setminus\{\mbf{x}\}$.
For such $\mbf{y}$, the shortest connecting path contained in $\Lambda$
satisfies $\{\mbf{x}_1,\ldots,\mbf{x}_{n-1}\}\subset\Lambda\setminus X$. By
moving $\mbf{y}$ along this path to the position $\mbf{x}_1$, which is nearest
neighbor to $\mbf{x}$ and applying \eq{Xeq} at each step, we conclude $$ a_X=
\mbf{\lambda}^{\mbf{y-x}} \lambda_{k_1}^{-p_1}a_{X'}
$$ 
with $X'$ a set containing the nearest neighbor pair $\{\mbf{x},\mbf{x}_1\}$.
As argued above, $a_{X'}=0$ and since $\lambda_k \neq 0$ for all $k$
we conclude that $a_X=0$.
Hence, there are no ground states with $N\geq 2$.
\end{proof}

In Proposition \ref{prop:FiniteGSS} we have assumed that $\lambda_k>0$ for all $k = 1,\, \ldots, \, d$, and we will continue to
make that assumption throughout the rest of the paper. Note that if some of the $\lambda_k$ vanish, there may be additional
solutions of the equations \eq{coefficientequations}. This happens, e.g., when
there are sites for which the only outgoing edges are in the coordinate
directions $k$ and the associated $\lambda_k = 0$.

We now turn to the proof of Proposition~\ref{prop:GS_Scenarios}:
\begin{proof}
Let $A\in\caA_{\mathrm{loc}}$, and $X = \mathrm{supp}(A)$. Then,
\begin{equation}\label{omega_1}
\omega_{1}^{\Lambda_n}(A) = \frac{1}{C(\Lambda_n)}\Big[\sum_{\mbf x,\mbf y \in X}\lambda^{\mbf x +\mbf y}\left\langle\xi_{\{\mbf x\}}^{\Lambda_n}, A\xi_{\{\mbf x\}}^{\Lambda_n}\right\rangle + \sum_{\mbf x\in \Lambda_n\setminus X}\lambda^{2 \mbf x}\left\langle\xi_{\emptyset}^{\Lambda_n}, A\xi_{\emptyset}^{\Lambda_n}\right\rangle \Big].
\end{equation}
Since $X$ is a finite set, both scalar products above do not depend on $n$. The
first sum is a finite constant, independent on $n$, while the second is equal to
$C(\Lambda_n\setminus X) \omega^\Gamma_{0}(A)$. As $n\to\infty$, the first term
tends to zero if and only if $C(\Lambda_n)\to\infty$. Furthermore, and again
since $X$ is finite, $C(\Lambda_n\setminus X)$ and $C(\Lambda_n)$ are either
both convergent or both divergent. In the latter case
\begin{equation*}
\lim_{n\to\infty} \frac{C(\Lambda_n\setminus X)}{C(\Lambda_n)} = 1.
\end{equation*}
Hence, if $C(\Lambda_n)\to\infty$, then $\omega_{1}^{\Lambda_n}(A)\to
\omega^\Gamma_{0}(A)$.

On the other hand, if $C(\Gamma):=\lim_{n\to\infty }C(\Lambda_n) < +\infty$,
then $\omega_{1}^{\Lambda_n}(A)$ converges, but $\lim_{n\to\infty
}\omega_{1}^{\Lambda_n}(A) \neq \omega^\Gamma_{0}(A)$. Let $\Omega_1^\Gamma$ be
the formal expression~(\ref{Omega_1}). Since
\begin{equation*}
\bigg\Vert \sum_{\mbf{x} \in \Gamma}
\lambda^{\mbf{x}}
\pi_{\omega_0^{\Gamma}}(\sigma_\mbf{x}^1)\Omega_{0}^{\Gamma} \bigg\Vert \leq \sum_{\mbf{x} \in \Gamma}\lambda^{2\mbf{x}}<\infty,
\end{equation*}
as $(\sigma_\mbf{x}^1)^*\sigma_\mbf{x}^1 = 1$, we see that $\Omega_1^\Gamma$
is a well-defined vector and $\Omega_{1}^{\Gamma}\in\caH_{\omega_0^\Gamma}$.
Moreover,
\begin{equation*}
\left\langle\Omega_{0}^{\Gamma}, \Omega_{1}^{\Gamma}\right\rangle_{\caH_{\omega_0^\Gamma}} = \sum_{\mbf x\in\Gamma}\lambda^{\mbf x}\omega_{0}^\Gamma(\sigma_{\mbf x}^1) = \sum_{\mbf x\in\Gamma}\lambda^{\mbf x}\left\langle 0 , \sigma_{\mbf x}^1 0 \right\rangle_{\caH_{\mbf x}} = 0.
\end{equation*}

It remains to prove that $\omega_{1}^{\Lambda_n}$ converges to the vector state given by $\Omega_{1}^{\Gamma}$. Since $\xi_{\{\mbf x\}}^{\Lambda_n} = \sigma_{\mbf
x}^1\xi_\emptyset^{\Lambda_n}$, we have that
\begin{equation*}
\omega_{1}^{\Lambda_n}(A) = \frac{1}{C(\Lambda_n)}\sum_{\mbf x,\mbf y \in X}\lambda^{\mbf x +\mbf y}\langle  \xi_\emptyset^{\Lambda_n}, \sigma_{\mbf x}^1 A \sigma_{\mbf y}^1 \xi_\emptyset^{\Lambda_n}\rangle + \frac{C(\Lambda_n\setminus X)}{C(\Lambda_n)}\left\langle\xi_{\emptyset}^{\Lambda_n}, A\xi_{\emptyset}^{\Lambda_n}\right\rangle.
\end{equation*}
So if $C(\Lambda_n)$ converges, then 
\begin{align*}
\lim_{n\to\infty }\omega_{1}^{\Lambda_n}(A) &= \frac{1}{C(\Gamma)}\sum_{\mbf x,\mbf y \in X}\lambda^{\mbf x +\mbf y}\omega^{\Gamma}_0(\sigma_{\mbf x}^1 A \sigma_{\mbf y}^1)
+ \frac{C(\Gamma\setminus X)}{C(\Gamma)}\omega_0^\Gamma(A) \\
&= \frac{1}{C(\Gamma)}\sum_{\mbf x,\mbf y \in \Gamma} \lambda^{\mbf x +\mbf y}
\left \langle \pi_{\omega_0^\Gamma}\left(\sigma_{\mbf x}^1\right)
\Omega_0^\Gamma ,  \pi_{\omega_0^\Gamma}\left(A\right)
\pi_{\omega_0^\Gamma}\left(\sigma_{\mbf y}^1\right) \Omega_0^\Gamma
\right\rangle \\
& = \omega_1^\Gamma(A),
\end{align*}
which concludes the proof for Scenario II.
\end{proof}
We note that the first term in~(\ref{omega_1}) corresponds to the projection
$\sum_{\mbf x\in X}\vert \xi_{\{\mbf
x\}}^{\Lambda_n}\rangle\langle\xi_{\{\mbf x\}}^{\Lambda_n}\vert$, and produces
the probability of finding the particle within $X$. Hence the theorem could be
restated as follows:
$\omega_{1}^{\Lambda_n}$ converges to the vacuum state if and
only if the probability to find the particle within any fixed finite volume
tends to $0$ as $\Lambda_n\to\Gamma$.

To conclude this section we show that in Scenario~II, any infinite volume ground
state obtained as a weak-* limit of finite volume ground states is realized as a
vector in the GNS Hilbert space that is a linear combination of the two vectors
$\Omega_0^\Gamma$ and $\Omega_1^\Gamma$. Let $\omega$ be the weak-* limit of a
sequence of finite volume ground state functionals
$\omega_{a_n,b_n}^{\Lambda_n}(\cdot) = \langle \psi_n,\, \cdot \, \psi_n
\rangle$ which are generated by normalized vectors $\psi_{n} = a_n \,
\psi_{0}^{\Lambda_n} + b_n \, \psi_1^{\Lambda_n}\in \caG_{\Lambda_n}$. We call
such $\omega$ zero-energy ground states.
Since $\{a_n\}$ and $\{b_n\}$ are bounded sequences in $\bC$, there exists some
subsequence $n_i$ such that $a_{n_i} \to a$, and $b_{n_i} \to b$ for some $a, \,
b \in \bC$. Now,
\begin{equation*}
\omega(A) = \lim_{i\to\infty} \omega_{a_{n_i}, b_{n_i}}^{\Lambda_{n_i}}(A) =
\vert a\vert ^2 \omega_0^\Gamma(A) + \vert b\vert^2\omega_1^\Gamma(A) +
\lim_{i\to\infty} \left[\overline{a_{n_i}} b_{n_i} \left\langle
\psi_{0}^{\Lambda_{n_i}}, A \psi_1^{\Lambda_{n_i}} \right\rangle + c.c.\right].
\end{equation*}
But by the argument above, 
\begin{equation*}
\lim_{i\to\infty} \left\langle \psi_{0}^{\Lambda_n}, A \psi_1^{\Lambda_n}
\right\rangle = \sum_{\mbf x\in\Gamma} \lambda^{\mbf x}\omega_0^\Gamma(A
\sigma_{\mbf x}^1) = \left\langle\Omega_0^\Gamma,\pi_{\omega_0^\Gamma}(A)
\Omega_1^\Gamma\right\rangle,
\end{equation*}
so that, altogether,
\begin{equation*}
\omega(A) = \left\langle\Omega_\omega^\Gamma,\pi_{\omega_0^\Gamma}(A) \Omega_\omega^\Gamma\right\rangle,
\end{equation*}
where $\Omega_\omega^\Gamma := a\Omega_0^\Gamma + b \Omega_1^\Gamma$. 


\section{Lower Bounds for the Spectral Gap}\label{sec:LowerBound}


\begin{subsection}{The Martingale Method}

 To prove the existence of a spectral gap in the thermodynamic limit, we appeal
 to the following theorem.
  
 \begin{thm} \label{thm:GNSSG} 
 Let $H_{\omega}$ be the GNS Hamiltonian of a zero-energy ground state $\omega$ 
 for the model defined on an subset $\Gamma \subseteq \bZ^d$, and let
 $\gamma(\Gamma)$ be the spectral gap of $H_{\omega}$ as defined in
 \eqref{Def_Gap}. Then
 \begin{equation*}
 \gamma(\Gamma) \geq \liminf_{N\geq 1}\lambda_1(N)
 \end{equation*}
 where $\lambda_1(N)$ is the smallest nonzero eigenvalue of the
 frustration-free Hamiltonians $H_{\Lambda_N}$, and $\Lambda_N$ is an
 increasing and absorbing sequence of finite volumes
 $\Lambda_N\to\Gamma$.
 \end{thm}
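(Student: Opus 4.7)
My plan is to reduce the GNS spectral gap estimate to a finite-volume variational bound on local vectors by exploiting frustration-freeness. Let $P_\omega$ denote the orthogonal projection onto $\ker H_\omega$ in $\caH_\omega$. Since $\pi_\omega(\caA_{\mathrm{loc}})\Omega_\omega$ is dense in $\caH_\omega$, proving the operator inequality $H_\omega \geq \liminf_N \lambda_1(N)\,(I - P_\omega)$, and hence the theorem, amounts to showing that for every strictly local observable $A$,
\begin{equation*}
\langle \pi_\omega(A)\Omega_\omega,\, H_\omega\, \pi_\omega(A)\Omega_\omega\rangle \;\geq\; \liminf_{N\to\infty}\lambda_1(N)\,\Bigl[\|\pi_\omega(A)\Omega_\omega\|^2 - \|P_\omega\, \pi_\omega(A)\Omega_\omega\|^2\Bigr].
\end{equation*}

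To compute the left hand side, I would fix $A$ and choose $N$ large enough that the support of $A$ together with all its $\Gamma$-nearest neighbors is contained in $\Lambda_N$. Because the PVBS interaction has range one, the terms of $H_\Gamma$ not included in $H_{\Lambda_N}$ commute with $A$, so the derivation generating the GNS dynamics acts as $H_\omega \pi_\omega(A)\Omega_\omega = \pi_\omega([H_{\Lambda_N},A])\Omega_\omega$ (using that $H_\omega\Omega_\omega = 0$). Frustration-freeness enters here: each positive semi-definite interaction term $h$ composing $H_{\Lambda_N}$ satisfies $\omega(h)=0$, so the Kadison--Schwarz inequality yields $\omega(hB) = \omega(Bh) = 0$ for every $B\in\caA_{\mathrm{loc}}$, and summing gives $\omega(H_{\Lambda_N} B) = \omega(B H_{\Lambda_N}) = 0$. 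Setting $B=A^*A$ kills one half of the commutator expansion, leaving
\begin{equation*}
\langle \pi_\omega(A)\Omega_\omega,\, H_\omega\, \pi_\omega(A)\Omega_\omega\rangle \;=\; \omega(A^* H_{\Lambda_N} A).
\end{equation*}
The finite-volume spectral gap $H_{\Lambda_N}\geq \lambda_1(N)(I - P_0^{\Lambda_N})$, with $P_0^{\Lambda_N}$ the projector onto the ground state space of Proposition~\ref{prop:FiniteGSS}, then gives
\begin{equation*}
\omega(A^* H_{\Lambda_N} A) \;\geq\; \lambda_1(N)\,\Bigl[\|\pi_\omega(A)\Omega_\omega\|^2 - \omega(A^* P_0^{\Lambda_N} A)\Bigr].
\end{equation*}

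Taking $N\to\infty$ along a subsequence realizing the liminf, the proof reduces to the estimate
\begin{equation*}
\limsup_{N\to\infty}\omega(A^* P_0^{\Lambda_N} A) \;\leq\; \|P_\omega\, \pi_\omega(A)\Omega_\omega\|^2,
\end{equation*}
and this is the one step that uses anything beyond abstract GNS structure; I expect it to be the main obstacle. My approach is to expand $P_0^{\Lambda_N} = |\psi_0^{\Lambda_N}\rangle\langle\psi_0^{\Lambda_N}| + |\psi_1^{\Lambda_N}\rangle\langle\psi_1^{\Lambda_N}|$ in the explicit basis of Proposition~\ref{prop:FiniteGSS}, and to compute the weak-$*$ limits of the rank-one expectations $\omega(A^*|\psi_i^{\Lambda_N}\rangle\langle\psi_i^{\Lambda_N}|A)$ directly from the formulas for $\psi_i^{\Lambda_N}$, using that $\xi_{\{\mathbf{x}\}}^{\Lambda_N} = \sigma_{\mathbf{x}}^1 \xi_\emptyset^{\Lambda_N}$ reduces each matrix element to a finite sum over $\mathbf{x}\in\mathrm{supp}(A)$. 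Proposition~\ref{prop:GS_Scenarios} then identifies these limits with $|\langle \Omega_i^\Gamma,\pi_\omega(A)\Omega_\omega\rangle|^2$ (with the $\Omega_1^\Gamma$ contribution vanishing in Scenario~I), and the characterization at the end of Section~\ref{sec:GroundStateSpace} places every zero-energy ground state in $\mathrm{span}\{\Omega_0^\Gamma,\Omega_1^\Gamma\}\subseteq\ker H_\omega$. Summing the two rank-one contributions therefore recovers exactly $\|P_\omega\,\pi_\omega(A)\Omega_\omega\|^2$, closing the argument.
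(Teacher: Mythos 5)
The paper never proves Theorem~\ref{thm:GNSSG}: it is simply ``appealed to'' as a known consequence of frustration-freeness in the spirit of \cite{nachtergaele:1996}, so there is no in-paper argument to compare yours against. Judged on its own, your proposal is the standard proof of this statement and its skeleton is sound: pass from $H_\omega$ to $\omega(A^*[H_{\Lambda_N},A])$ via the derivation, kill $\omega(A^*A\,H_{\Lambda_N})$ by Cauchy--Schwarz against $\omega(h_b)=0$, insert the finite-volume bound $H_{\Lambda_N}\geq \lambda_1(N)(\idtyty-G_{\Lambda_N})$, and control the ground-state overlap term. You also correctly identify $\limsup_N\omega(A^*G_{\Lambda_N}A)\leq \Vert P_\omega\pi_\omega(A)\Omega_\omega\Vert^2$ as the only model-dependent step, and your plan for it works: since $\langle\xi^{\Lambda_N}_{\{\mbf x\}},A\,\xi^{\Lambda_N}_\emptyset\rangle$ vanishes unless $\mbf x\in\supp A$, the rank-one expectations collapse onto finite sums, giving $|\omega(A)|^2$ for the $\psi_0^{\Lambda_N}$ term and a quantity of order $C(\Lambda_N)^{-1}$ (Scenario I) or converging to $|\langle\Omega_1^\Gamma,\pi_\omega(A)\Omega_\omega\rangle|^2$ (Scenario II) for the $\psi_1^{\Lambda_N}$ term.

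Two points deserve more care than you give them. First, to upgrade the quadratic-form inequality on vectors $\pi_\omega(A)\Omega_\omega$ to the operator inequality $H_\omega\geq\liminf_N\lambda_1(N)(\idtyty-P_\omega)$, you need $\pi_\omega(\caA_{\mathrm{loc}})\Omega_\omega$ to be a form core for $H_\omega$, not merely dense; this is standard for GNS Hamiltonians of bounded finite-range interactions but should be stated. Second, your closing step requires the inclusion $\mathrm{span}\{\Omega_0^\Gamma,\Omega_1^\Gamma\}\subseteq\ker H_\omega$, and the passage at the end of Section~\ref{sec:GroundStateSpace} does not literally provide it: it shows that every weak-* limit of finite-volume ground states is represented by a vector in that span, not that $H_\omega\Omega_1^\Gamma=0$. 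The missing fact is true and short --- approximate $\Omega_1^\Gamma$ by $\pi_\omega(A_\Lambda)\Omega_0^\Gamma$ with $A_\Lambda=C(\Gamma)^{-1/2}\sum_{\mbf x\in\Lambda}\lambda^{\mbf x}\sigma^1_{\mbf x}$ and observe that only boundary bonds contribute to the energy, with weights summable precisely in Scenario II --- but without it the final inequality is not closed. With these two additions your argument is complete.
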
 

The martingale method provides a means for estimating the lower bound for the
spectral gap of the Hamiltonian $H_{\Lambda_N}$, for suitable finite volumes
$\Lambda_N$. To apply Theorem \ref{thm:GNSSG} to obtain a lower bound for the
infinite volume spectral gap $\gamma(\Gamma)$, we need to find a sequence of
absorbing finite volumes $\Lambda_N$ for which the martingale method yields a
uniform lower bound for the finite volume spectral gaps. For a fixed finite
 volume $\Lambda_N$, the martingale method requires a finite sequence of volumes
 $\Lambda_n$ that increase to $\Lambda_N$, for which one can verify the
 following three conditions.
\\
\\
\noindent \textbf{Conditions for the Martingale Method.}

 The following conditions must hold for a suitable value of $l>0$. 
 
 \begin{enumerate}
   \item [(1)] There exists a constant $d_l$ for which the local Hamiltonians
   satisfy \[0\leq \sum_{n=l}^NH_{\Lambda_n\backslash\Lambda_{n-l}}\leq d_l
   H_{\Lambda_N}.\]
  
   \item [(2)]The local Hamiltonians $H_{\Lambda_n}$ have a non-trivial kernel
   $\mathcal{G}_{\Lambda_n}\subseteq\mathcal{H}_{\Lambda_n}$ and a
   nonvanishing spectral gap $\gamma_l>0$ such that:
   \[H_{\Lambda_n\backslash\Lambda_{n-l}}\geq\gamma_l(\mathbb{I}-G_{\Lambda_n\backslash\Lambda_{n-l}})\]
   for all $n\geq n_l$. We denote by $G_{\Lambda_n}$ the orthogonal projection
   onto $\mathcal{G}_{\Lambda_n}$. For $\Lambda_n\subseteq\Lambda_N$,
   $G_{\Lambda_n}$ projects onto
   $\mathcal{G}_{\Lambda_n}\otimes\mathcal{H}_{\Lambda_N\backslash\Lambda_{n}}.$
   \item [(3)]There exists a constant $\epsilon_l<\tfrac{1}{\sqrt{l}}$ and
   some $n_l$ such that for all $n_l\leq n \leq N-1$,
   \[\|G_{\Lambda_{n+1}\backslash\Lambda_{n+1-l}}E_n\|\leq \epsilon_l\] where
   $E_n=G_{\Lambda_n}-G_{\Lambda_{n+1}}.$ 
\end{enumerate}
 \begin{thm}[The Martingale Method]\label{thm:MartMeth}
 Assume that conditions (1)-(3) are satisfied for the same integer $l.$ Then for
 any $\psi\in\mathcal{H}_{\Lambda_N}$ such that $G_{\Lambda_N}\psi=0$, one has
 \begin{equation}
 \langle\psi,\,H_{\Lambda_N}\psi\rangle \geq
\frac{\gamma_{l}}{d_{l}}(1-\epsilon_l\sqrt{l})^2\|\psi\|^2.
\end{equation} 
\end{thm}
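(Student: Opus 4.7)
The plan is to follow the classical martingale argument of \cite{nachtergaele:1996}, feeding the three hypotheses in sequence. First, writing $\tilde G_n := G_{\Lambda_n\setminus\Lambda_{n-l}}$, I would combine Conditions (1) and (2) to obtain
\begin{equation*}
H_{\Lambda_N} \;\geq\; \frac{1}{d_l}\sum_{n=l}^{N}H_{\Lambda_n\setminus\Lambda_{n-l}} \;\geq\; \frac{\gamma_l}{d_l}\sum_{n=l}^{N}\bigl(\idtyty-\tilde G_n\bigr),
\end{equation*}
so that pairing with $\psi$ and using that each $\idtyty-\tilde G_n$ is an orthogonal projection gives
\begin{equation*}
\langle\psi,H_{\Lambda_N}\psi\rangle \;\geq\; \frac{\gamma_l}{d_l}\sum_{n=l}^{N}\|(\idtyty-\tilde G_n)\psi\|^2.
\end{equation*}

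Next I would set up the martingale decomposition. Because the model is frustration-free and $\Lambda_n\subset\Lambda_{n+1}$, the inclusion $\ker H_{\Lambda_{n+1}}\subseteq\ker H_{\Lambda_n}$ yields $G_{\Lambda_{n+1}}\leq G_{\Lambda_n}$, so the increments $E_n:=G_{\Lambda_n}-G_{\Lambda_{n+1}}$ are mutually orthogonal projections with $\sum_{n=0}^{N-1}E_n = \idtyty-G_{\Lambda_N}$ (using the convention $G_{\Lambda_0}=\idtyty$). The assumption $G_{\Lambda_N}\psi=0$ then gives $\psi=\sum_n E_n\psi$ and $\|\psi\|^2=\sum_n\|E_n\psi\|^2$. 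The inclusion $H_{\Lambda_n\setminus\Lambda_{n-l}}\leq H_{\Lambda_n}$ likewise yields $G_{\Lambda_n}\leq \tilde G_n$ and hence $\tilde G_n G_{\Lambda_n}=G_{\Lambda_n}$, so defining $\psi_n:=(\idtyty-G_{\Lambda_n})\psi=\sum_{k<n}E_k\psi$ one obtains the key identity
\begin{equation*}
(\idtyty-\tilde G_n)\psi \;=\; (\idtyty-\tilde G_n)\psi_n.
\end{equation*}

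The heart of the proof is then to establish
\begin{equation*}
\sum_{n=l}^{N}\|(\idtyty-\tilde G_n)\psi\|^2 \;\geq\; (1-\epsilon_l\sqrt l)^2\,\|\psi\|^2,
\end{equation*}
which combined with the first display produces the claimed lower bound. Splitting $\psi_n=\psi_{n-1}+E_{n-1}\psi$ into orthogonal pieces in the identity above, expanding the square, using $\psi_{n-1}\perp E_{n-1}\psi$, and inserting the Cauchy--Schwarz bound $|\langle\psi_{n-1},\tilde G_n E_{n-1}\psi\rangle|\leq \epsilon_l\|\psi_{n-1}\|\,\|E_{n-1}\psi\|$ supplied by Condition (3) yields a recursion relating $\|(\idtyty-\tilde G_n)\psi\|^2$, $\|(\idtyty-\tilde G_n)\psi_{n-1}\|^2$, and $\|E_{n-1}\psi\|^2$. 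Summing this recursion over $n=l,\ldots,N$ and applying a discrete $\ell^2$ Cauchy--Schwarz step to aggregate the cross contributions over the window of length $l$ on which $\tilde G_n$ fails to annihilate the $E_k$'s produces the desired estimate.

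The technical obstacle is precisely the final summation. Condition (3) only directly controls the ``diagonal'' pair $\tilde G_n E_{n-1}$, while the other increments $E_k$ with $k\leq n-l-1$ commute with $\tilde G_n$ by locality but are not annihilated by it. A naive per-term estimate would produce only the weaker factor $(1-\epsilon_l)^2$; recovering the sharper $(1-\epsilon_l\sqrt l)^2$ requires careful bookkeeping that each increment $E_k$ contributes to at most $l$ of the windows $\Lambda_n\setminus\Lambda_{n-l}$, which is precisely the origin of the $\sqrt l$ factor and the reason Condition (3) is imposed with $\epsilon_l<1/\sqrt l$.
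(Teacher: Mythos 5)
The paper does not actually prove this theorem; it defers entirely to \cite{nachtergaele:1996}, so your sketch can only be measured against the argument there, which it is clearly modelled on. Your opening steps are correct and standard: conditions (1) and (2) give $\langle\psi,H_{\Lambda_N}\psi\rangle\geq\frac{\gamma_l}{d_l}\sum_n\|(\idtyty-\tilde G_n)\psi\|^2$, frustration-freeness makes the $E_n$ mutually orthogonal projections resolving $\idtyty-G_{\Lambda_N}$, and $G_{\Lambda_N}\psi=0$ gives $\|\psi\|^2=\sum_n\|E_n\psi\|^2$. The reduction to showing $\sum_n\|(\idtyty-\tilde G_n)\psi\|^2\geq(1-\epsilon_l\sqrt l)^2\|\psi\|^2$ is also the right target.

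The gap is in the step you yourself label ``the technical obstacle'': that step is the entire content of the theorem, and the inequality you propose to feed into it cannot produce the stated constant. Bounding the cross term by $|\langle\psi_{n-1},\tilde G_nE_{n-1}\psi\rangle|\leq\epsilon_l\|\psi_{n-1}\|\,\|E_{n-1}\psi\|$ and summing over $n$ yields an error of order $\epsilon_l\|\psi\|\sum_n\|E_{n-1}\psi\|\leq\epsilon_l\sqrt N\,\|\psi\|^2$, which is useless as $N\to\infty$; in particular it does not even recover the ``weaker factor $(1-\epsilon_l)^2$'' you mention. The missing idea is to expand $\psi_{n-1}=\sum_{k\leq n-2}E_k\psi$ \emph{inside} the cross term and note that for $k\leq n-l-1$ the projection $E_k$ is supported in $\Lambda_{k+1}\subseteq\Lambda_{n-l}$, hence commutes with $\tilde G_n$, so that $\langle E_k\psi,\tilde G_nE_{n-1}\psi\rangle=\langle\tilde G_n\psi,E_kE_{n-1}\psi\rangle=0$ by orthogonality of the increments. (You observe the commutation but treat it as an obstacle rather than as the cancellation mechanism.) Only the at most $l$ increments with $n-l\leq k\leq n-1$ survive; a Cauchy--Schwarz over that window, followed by the count that each $k$ lies in at most $l$ windows, is what produces the error $\epsilon_l\sqrt l\,\|\psi\|^2$ and hence the factor $(1-\epsilon_l\sqrt l)^2$. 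The cleanest organization, and the one in \cite{nachtergaele:1996}, dispenses with your recursion: write $\|\psi\|^2=\sum_n\langle\psi,E_n\psi\rangle$, split each term with $\idtyty=\tilde G_{n+1}+(\idtyty-\tilde G_{n+1})$, kill the far-away $E_k$ in the first piece as above, and Cauchy--Schwarz the second piece against $\bigl(\sum_n\|(\idtyty-\tilde G_{n+1})\psi\|^2\bigr)^{1/2}\|\psi\|$. Until that cancellation is made explicit, your argument does not close.
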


A proof of Theorem \ref{thm:MartMeth} can be found in \cite{nachtergaele:1996}.
\end{subsection}

\begin{subsection}{The Spectral Gap for PVBS Models on Rectangular Boxes}

By virtue of Theorem \ref{thm:GNSSG}, the proof of the lower bound in Theorem
\ref{thm:SpectralGap}(i) will follow from appropriate lower bounds for the
spectral gap for the model on rectangular boxes, which is the focus of this
section. Recall that in the statement of the theorem we assume $\lambda_k \in
(0, \infty)$. We additionally assume that $\lambda_k \neq 1$ for all $k$ as the
lower bound in Theorem \ref{thm:SpectralGap} is trivial if any value $\lambda_k
= 1$.

Let $\mbf{N} = (N_1, \, \ldots, N_d)$ with
$N_k\geq 1$ for all $k$. By the translation invariance of the PVBS models,
proving the lower bound \eqref{lowerbounds} on any rectangular volume of the
form
\begin{equation*}
\Gamma_\mbf{N} = \prod_{k=1}^d [0, \, N_k] \subset \bZ^d
\end{equation*}
is sufficient to prove the lower bound for any $d$-dimensional rectangular
solid. The remainder of this section will be dedicated to proving the lower
bound on the spectral gap for a PBVS model defined on a general lattice of the
form $\Gamma_{\mbf{N}}$.
 
 \begin{proof}[Proof of Theorem \ref{thm:SpectralGap} (i)]
  
  Define
 \begin{equation*}
 \Gamma_{\bfN}^{(k)} = \prod_{i=1}^{d-k}[0, \, N_i] \hskip 10pt \times
 \prod_{i=d-k+1}^d [0, \, 1].
 \end{equation*}
 and note that $\Gamma_{\bfN} = \Gamma_{\bfN}^{(0)}$. Using the
 martingale method we prove for $\epsilon(\lambda_k)$ defined as in
 \eqref{eqn:epsilon} that
 \begin{equation} \label{firstMM}
 \gamma(\Gamma_{\bfN}) \geq \frac{\gamma(\Gamma_{\bfN}^{(1)})}{2} (1 -
 \epsilon(\lambda_d)\sqrt{2})^2.
 \end{equation}
 By  relabeling the axes and permuting the model
 parameters accordingly, the result \eq{firstMM} actually proves
 \begin{equation} \label{arbitraryMM}
 \gamma(\Gamma_{\bfN}^{(k)}) \geq \frac{\gamma(\Gamma_{\bfN}^{(k+1)})}{2}
 (1-\epsilon(\lambda_{d-k})\sqrt{2})^2
 \end{equation} 
 for $0 \leq k \leq d-1$. Additionally, since $\Gamma_{\bfN}^{(d)} = B_d$, the
 unit hypercube, we will obtain the desired uniform lower bound by recursively
 applying \eq{arbitraryMM}:
 \begin{equation}\label{finitelowerbound}
 \gamma(\Gamma_{\bfN}) \geq \frac{\gamma(B_d)}{2^d} \prod_{k=1}^d
 (1-\epsilon(\lambda_k)\sqrt{2})^2.
 \end{equation}
  
 Let  $N_d>1$, since if $N_d =1 $ the bound given in $\eqref{firstMM}$ is
trivial since $\Gamma_{\bfN} = \Gamma_{\bfN}^{(1)}$ and
$\tfrac{1}{2}(1-\epsilon(\lambda_d)\sqrt{2})^2<1$.
As the sequence of finite volumes increasing to $\Gamma_{\bfN}$ we choose
 \begin{equation*}
 \Lambda_n = \prod_{i=1}^{d-1}[0, \, N_{i}] \times [0, n]
 \end{equation*}
 for $1\leq n \leq N_d$. Note that $\Lambda_{N_d} = \Gamma_{\bfN}$. We will use
 $l = 2$ to satisfy the conditions of the martingale method. For this value of
 $l$, conditions (1) and (2) are easily verified since the
 PVBS Hamiltonians have range-one interactions, are translation invariant and
 are furstration-free. Specifically, we find
 \begin{equation*}
 d_l =2, \hskip 12pt \text{and} \hskip 12pt \gamma_l =
 \gamma(\Gamma_{\bfN}^{(1)}).
 \end{equation*}
 It remains to prove condition (3). For our choice of $l$ this is equivalent to
 showing there exists an $\epsilon <\tfrac{1}{\sqrt{2}}$ such that
 \begin{equation}\label{eqn:MMCond3Equiv}
 \sup_{\psi \, \in \, \caG_{\Lambda_n} \cap \caG_{\Lambda_{n+1}}^\perp
 }\frac{\|G_{\Lamdif{n-1}}\psi\|^2}{\|\psi\|^2}<\epsilon^2.
 \end{equation}
 Let $\lambda = (\lambda_1, \, \ldots, \, \lambda_d)$ be the collection of PVBS
 model parameters, and  $\mbf{x}=(x_1,\, x_2,\, \ldots,\, x_d) \in \bZ^d$.
 In the discussion that follows, the definition of the following set and
 constants will be useful:
 \begin{align}
 T_\bfN                 & = \prod_{k=1}^{d-1}[0, \, N_k] \subseteq \bZ^{d-1} \label{TN}\\
 C(T_\bfN)            & = \prod_{k=1}^{d-1}c(\lambda_k, N_k),\hskip 12pt
 \text{where} \hskip 12pt c(\lambda_k, m) = \sum_{i=0}^m \lambda_k^{2i}. 
 \nonumber
 \end{align}
 Furthermore, for $\mbf{x}\in T_\bfN$, we define $(\mbf{x},x_d) \in \bZ^d$ to
 the be $d$-tuple obtained from appending $x_d$ to $\mbf{x}$. 

 Recall the expressions for the ground states $\psi_0^\Lambda$ and
 $\psi_1^\Lambda$ given in \eq{eqn:FiniteGSB}.
 Since the ground state space on any connected lattice consists of states with
 at most one particle, it is sufficient to only consider vectors $\psi \in
 \caG_{\Lambda_n} \cap \caG_{\Lambda_{n+1}}^\perp $ in equation
 \eqref{eqn:MMCond3Equiv} that have at most one particle on $\Lamdif{n}$, as
 vectors with more than one particle will be annihilated by $G_{\Lamdif{n-1}}$.
 The following is the general form for the vector $\psi$ of interest.
\begin{equation}\label{eqn:psi}
 \psi=b_0 \psi_1^{\Lambda_n} \otimes \psi_0 ^{\Lamdif{n}} + 
	  \sum_{\mbf{x} \in T_\bfN } \left[
	  a_{\mbf{x}} \psi_0^{\Lambda_n} \otimes
	  \xi_{\{(\mbf{x}, n+1)\}}^{ \Lamdif{n}} + b_{\mbf{x}} \psi_1^{\Lambda_n} \otimes
	 \xi_{\{(\mbf{x}, n+1)\}}^{ \Lamdif{n}}
	 \right]
\end{equation} 

 where
 \begin{equation}\label{eqn:BEmptySet}
 b_0 = - \frac{ 1 }{ \sqrt{ C(T_\bfN)c(\lambda_d,n) } } \sum_{\mbf{x}
 \in T_\bfN} \mbf{\lambda} ^{ (\mbf{x}, n+1)} a_{\mbf{x}} .
 \end{equation}
 Recall that $G_{\Lamdif{n-1}}$ is of the form
 \begin{equation*}
 G_{\Lamdif{n-1}} = | \psi_{0}^{\Lamdif{n-1}} \rangle\langle
 \psi_{0}^{\Lamdif{n-1}} | + | \psi_1^{\Lamdif{n-1}} \rangle\langle
 \psi_1^{\Lamdif{n-1}} |
 \end{equation*}
 with
 \begin{equation*}
 \psi_1^{\Lamdif{n-1}}=
 \frac{1}{ \sqrt{ (1+\lambda_d^2)C(T_\bfN) } } 
 \sum_{ \mbf{x} \in T_\bfN} \left[\mbf{\lambda}^{(\mbf{x},0)}
 \xi_{\{(\mbf{x},n)\}}^{\Lamdif{n-1}} + \mbf{\lambda}^{(\mbf{x},1)}
 \xi_{\{(\mbf{x},n+1)\}}^{\Lamdif{n-1}}\right].
 \end{equation*}
Applying $G_{\Lamdif{n-1}}$ to $\psi$ yields
 \begin{align*}
 G_{\Lamdif{n-1}}\psi = & \left( \frac{ b_0 \lambda_d^n }{ \sqrt{
    c(\lambda_d,n) (1+\lambda_d^2) }} + \sum_{ \mbf{x} \in T_\bfN } \frac{
    a_{\mbf{x}} \, \mbf{\lambda}^{(\mbf{x},1)} }{
    \sqrt{(1+\lambda_d^2) C(T_\bfN)} } \right)
    \psi_{0}^{\Lambda_{n-1}} \otimes \psi_1^{\Lamdif{n-1}} \\
 & + \left( \sum_{ \mbf{x} \in T_\bfN } \frac{ b_{\mbf{x}}
     \, \mbf{\lambda}^{(\mbf{x}, 1)} \sqrt{c(\lambda_d, n-1)} }{ \sqrt
     { c(\lambda_d, n) (1+\lambda_d^2) C(T_\bfN)} }
     \right)\psi_1^{\Lambda_{n-1}} \otimes \psi_1^{\Lamdif{n-1}} \\
 &  + \frac{ b_0 \sqrt{ c(\lambda_d, n-1) } }{ \sqrt{ c(\lambda_d, n) }
       } \psi_1^{\Lambda_{n-1}}\otimes \psi_0^{\Lamdif{n-1}}.
\end{align*}

 Replacing $b_{0}$ by the expression given in \eqref{eqn:BEmptySet} and applying
 the Cauchy-Schwarz inequality, we find that $\|G_{\Lamdif{n-1}}\psi\|^2$ is
 bounded above by
\begin{align*}
 \|G_{\Lamdif{n-1}}\psi\|^2 & \leq \frac{ \lambda_d^2 c(\lambda_d, n-1) }{
      (1+\lambda_d^2) c(\lambda_d, n) } \left( \sum_{\mbf{x} \in T_\bfN}
      \left(|a_{\mbf{x}}|^2 +|b_{\mbf{x}}|^2 \right) +    \frac{ 1 }{
      c(\lambda_d, n) C(T_\bfN)} \Big| \sum_{\mbf{x} \in T_\bfN}
      \mbf{\lambda}^{(\mbf{x}, n+1)} a_{\mbf{x}} \Big|^2 \right)      \\
&  =   \frac{ \lambda_d^2 c(\lambda_d, n-1) }{ (1+\lambda_d^2) c(\lambda_d, n)
      } \| \psi \|^2.
\end{align*}
Hence,
\begin{equation*}
\sup_{\psi \in \caG_{\Lambda_n} \cap \caG_{\Lambda_{n+1}}^\perp
  }\frac{\|G_{\Lamdif{n-1}}\psi\|^2}{\|\psi\|^2} \leq \frac{ \lambda_d^2
  c(\lambda_d, n-1) }{ (1+\lambda_d^2) c(\lambda_d, n) }.
\end{equation*}

 We must consider the cases $\lambda_d<1$ and $\lambda_d>1$ separately. When
 $\lambda_d<1$ we see that as $n\to\infty$
 \begin{equation*}
 \frac{ c(\lambda_d,n-1) }{ c(\lambda_d,n) }\longrightarrow 1.
 \end{equation*}
 So for all $n$,
 \begin{equation*}
 \frac{ \lambda_d^2 c(\lambda_d,n-1) }{ (1+\lambda_d^2) c(\lambda_d,n) } \leq
 \frac{ \lambda_d^2 }{ 1+\lambda_d^2 }.
 \end{equation*}
 In this case, since $0<\lambda_d<1$,
 $\tfrac{\lambda_d^2}{1+\lambda_d^2}<\tfrac{1}{2}$.
 In the case that $\lambda_d>1$, taking the limit $n \to \infty$ produces
 \begin{equation*}
 \frac{c(\lambda_d,n-1)}{c(\lambda_d,n)}\longrightarrow \frac{1}{\lambda_d^2}.
 \end{equation*}
 Therefore, when $\lambda_d > 1$ we have that for all $n$
 \begin{equation*}
 \frac{\lambda_d^2 c(\lambda_d,n-1)}{(1+\lambda_d^2)c(\lambda_d,n)}\leq
 \frac{1}{1+\lambda_d^2}.
 \end{equation*}
 Since $\lambda_d>1$ it follows that
 $\tfrac{1}{1+\lambda_d^2}<\tfrac{1}{2}$. Therefore, for all values of
 $\lambda_d$,
 \begin{equation*}
 \sup_{\psi \in \caG_{\Lambda_n} \cap \caG_{\Lambda_{n+1}}^\perp
 }\frac{\|G_{\Lamdif{n-1}}\psi\|^2}{\|\psi\|^2} \leq \epsilon(\lambda_d)^2 <
 \frac{1}{2}
 \end{equation*}
 where $\epsilon(\lambda_d)$ is as defined in \eqref{eqn:epsilon}.
 Thus, the third condition of the martingale method is satisfied, and by Theorem
 \ref{thm:MartMeth}
 \begin{equation*}
 \gamma(\Gamma_\bfN) \geq \frac{\gamma(\Gamma_N^{(1)})}{2} ( 1 -
 \epsilon(\lambda_d) \sqrt{2})^2,
 \end{equation*} 
 as desired. Then, appealing to equations \eqref{arbitraryMM}
 and \eqref{finitelowerbound}, and Theorem \ref{thm:GNSSG} we obtain all of the
 desired lower bounds of Theorem \ref{thm:SpectralGap}(i).
\end{proof}

\end{subsection}


\section{Upper Bounds and Closures of the Spectral
Gap}\label{sec:UpperBound}


 In this section we prove the upper bound
 for $\gamma(\bZ^d)$ given in Theorem \ref{thm:SpectralGap} as well as discuss two
 ways in which the spectral gap can close. Specifically, we show that the PVBS
 models are gappless if all $\lambda_k = 1$, and
 we prove Proposition \ref{prop:Diamonds} which shows through an example that
 the existence of  spectral gap in the thermodynamic limit is dependent on
 the boundary of the infinite lattice.  We use the variational principle in the GNS Hilbert space to prove both claims. 
 
\subsection{The Bulk Phase Transition}

We first prove part (ii) of the theorem. By Proposition~\ref{prop:GS_Scenarios}, the ground state of the PVBS model is unique for all value of the parameters $\lambda_1,\ldots\lambda_d$. Let $(\pi_0^{\bbZ^d}, \caH_0^{\bbZ^d}, \Omega_0^{\bbZ^d})$ be its GNS representation. For any finite volume $\Lambda$, and any $\mbf z = (z_i,\ldots z_d)$, with $z_i\in\bbC$, let $\tilde\varphi_\Lambda(\mbf z)\in\caH_0^{\bbZ^d}$ be the vector given by
\begin{equation*}
\tilde\varphi_\Lambda(\mbf z) = \pi_0^{\bbZ^d}(A_\Lambda(\mbf z))\Omega_0^{\bbZ^d}
\end{equation*}
where the local observable $A_\Lambda(\mbf z)\in\caA_\Lambda$ is given by
\begin{equation*}
A_\Lambda(\mbf z) = \sum_{\mathbf{x}\in\Lambda}\mbf z^{\mbf{x}} \sigma_{\mbf{x}}^1.
\end{equation*}
We first note that
\begin{equation*}
\left\langle \tilde\varphi_\Lambda(\mbf z),\Omega_0^{\bbZ^d}\right\rangle
=  \sum_{\mathbf{x}\in\Lambda} \mbf z^{\mbf{x}} \left\langle \Omega_0^{\bbZ^d}, \pi_0^{\bbZ^d}\left(\sigma_{\mbf{x}}^1\right)\Omega_0^{\bbZ^d}\right\rangle
= \sum_{\mathbf{x}\in\Lambda} \mbf z^{\mbf{x}} \left\langle 0, \sigma_{\mbf{x}}^10\right\rangle = 0
\end{equation*}
so that $\tilde\varphi_\Lambda(\mbf z)$ is a good variational vector for any $\mbf z\in\bbC^d$. Furthermore,
\begin{equation*}
\left\Vert \tilde\varphi_\Lambda(\mbf z) \right\Vert^2 = \left\Vert A_\Lambda(\mbf z) \psi_0^{\Lambda} \right\Vert^2 = \sum_{\mbf x\in\Lambda}\vert \mbf z\vert^{2\mbf x}
\end{equation*}
and we shall denote $\varphi_\Lambda(\mbf z):=A_\Lambda(\mbf z) \psi_0^{\Lambda} = \sum_{\mathbf{x}\in\Lambda}\mbf z^{{\mbf{x}}} \xi_{\mbf{\{x\}}}^{\Lambda}$. Finally,
\begin{equation*}
\left\langle \tilde\varphi_\Lambda(\mbf z), H_0^{\bbZ^d} \tilde\varphi_\Lambda(\mbf z) \right\rangle 
= \omega_0^{\bbZ^d}\left(A_\Lambda(\mbf z) \str \delta(A_\Lambda(\mbf z))\right) 
= \omega_0^{\bbZ^d}\left(A_\Lambda(\mbf z) \str [H_{\Lambda^{(1)}}, A_\Lambda(\mbf z)] \right)
= \left\langle \varphi_\Lambda(\mbf z), H_{\Lambda^{(1)}} \varphi_\Lambda(\mbf z)\right\rangle
\end{equation*}
where we denoted $\Lambda^{(1)} :=\{x\in\bbZ^d: d(x,\Lambda)\leq 1\}$ and used that $H_{\Lambda^{(1)}} \psi_0^{\Lambda^{(1)}} = 0$.

For any $\mbf x\in \Lambda$ and $1\leq k\leq d$ such that $\mathbf{x},
\mbf{x}+e_k \in \Lambda$ (i.e. the bulk terms) we compute
\begin{equation*}
h_{\mbf{x}, \mbf{x}+e_k}^{(k)} \varphi_\Lambda(\mbf z) = \frac{\mbf z^\mbf x(z_k - \lambda_k)}{1+\lambda_k^2}\left(\xi_{\mbf{\{x+e_k\}}}^{\Lambda} - \lambda_k\xi_{\mbf{\{x\}}}^{\Lambda}\right)
\end{equation*}
so that
\begin{equation*}
\left\langle \varphi_\Lambda(\mbf z), H_{\Lambda}\varphi_\Lambda(\mbf z)\right\rangle = \sum_{\mbf{x}\in\Lambda}\vert \mbf z \vert^{2\mbf x}\sum_{k=1}^d\frac{(z_k - \lambda_k)(\bar z_k - \lambda_k)}{1+\lambda_k^2}.
\end{equation*}
For the boundary terms, we have for any $1\leq k\leq d$,
\begin{equation*}
\left\langle {\varphi_\Lambda}(\mbf z),h_{\mbf{x}, \mbf{x}+e_k}^{(k)} \varphi_\Lambda(\mbf z)\right\rangle = 
\begin{cases}
\frac{\lambda_k^2}{1+\lambda_k^2}\vert \mbf z \vert^{2\mbf x} & \text{if }\mbf x\in\Lambda, \mbf x + e_k\notin\Lambda \\
\frac{\vert z_k \vert^2}{1+\lambda_k^2}\vert \mbf z \vert^{2\mbf x} & \text{if }\mbf x\notin\Lambda, \mbf x + e_k\in\Lambda
\end{cases}
\end{equation*}

We now choose for convenience the finite set to be
\begin{equation*}
\Lambda=\bar\Gamma_\bfN = \prod_{k=1}^d [-N_k, \, N_k] \subset \bZ^d.
\end{equation*}
with $\left\vert \bar\Gamma_\bfN\right\vert = \prod_{k=1}^d(2N_k + 1)$. The sum of all boundary terms reads
\begin{equation*}
\sum_{k=1}^d\frac{1}{1+\lambda_k^2} \left(\lambda_k^2\vert z_k\vert^{2N_k} + \vert z_k\vert^{-2(N_k-1)}\right) \prod_{i\neq k}\sum_{\alpha = -N_i}^{N_i} \vert z_i \vert^{2\alpha}.
\end{equation*}

Let us consider the special case $z_k = 1$ for all $k$. Then $\left\Vert \tilde\varphi_\Lambda(\mbf z) \right\Vert^2 = \left\vert \bar\Gamma_\bfN\right\vert$ and the boundary contributions read
\begin{equation*}
\left\vert \bar\Gamma_\bfN\right\vert\sum_{k=1}^d\frac{1}{2N_k + 1}\, .
\end{equation*}
Both of these expressions are independent of $\lambda$. Furthermore, the bulk terms are bounded by
\begin{equation*}
\left\vert \bar\Gamma_\bfN\right\vert \sum_{k: \lambda_k\neq 1}\frac{(1-\lambda_k)^2}{1+\lambda_k^2},
\end{equation*}
so that
\begin{equation*}
\frac{\left\langle \tilde\varphi_{\bar\Gamma_\bfN}(1), H_0^{\bbZ^d} \tilde\varphi_{\bar\Gamma_\bfN}(1) \right\rangle}{\left\Vert \tilde\varphi_{\bar\Gamma_\bfN}(1)\right\Vert^2} \leq \sum_{k: \lambda_k\neq 1}\frac{(1-\lambda_k)^2}{1+\lambda_k^2} + \sum_{k=1}^d\frac{1}{2N_k + 1}.
\end{equation*}
which yields the estimate
\begin{equation*}
\gamma(\bbZ^d)\leq\liminf_{\bfN\to\infty} \frac{\left\langle \tilde\varphi_{\bar\Gamma_\bfN}(1), H_0^{\bbZ^d} \tilde\varphi_{\bar\Gamma_\bfN}(1) \right\rangle}{\left\Vert \tilde\varphi_{\bar\Gamma_\bfN}(1)\right\Vert^2}  \leq \sum_{k: \lambda_k\neq 1}\frac{(1-\lambda_k)^2}{1+\lambda_k^2}
\end{equation*}
In particular, the model is gapless if $\lambda_k = 1$ for all $1\leq k\leq d$.

\subsection{Low-lying Edge States} 
In this section we explore how the thermodynamic behavior depends on the
boundary of the infinite lattice. We previously showed that the existence of a
unique GNS ground state vector is closely tied with the geometry, and more
specifically the boundary, of the infinite volume one considers. It is then
natural to ask whether the geometry will effect other properties in the
thermodynamic limit. 

As an example, we discuss the case of the diagonal-edged infinite volume
$D_\infty\subset \bZ^2$ defined by $(x,y) \in D_\infty$ if and only if $y \geq -x$
and prove that, unlike an infinite lattice obtained as the limit of finite
rectangles, the PVBS model with one species of particle is in fact gapless when
$\lambda_1=\lambda=\lambda_2$ with $\lambda < 1$. The closure of the gap in this
case is due to the existence of boundary states in which a single particle
is delocalized across the boundary.

\begin{proof}[Proof of Proposition~\ref{prop:Diamonds}]
We construct a sequence of variational vectors in the GNS
Hilbert space for which the energy decays polynomially. The vectors we choose
belong to the one particle sector with the particles bound to the finite
volume $D_L$ defined by:
 \bea
 D_L=\{(x,y): 0\leq x+y\leq L,|x-y|\leq L/2\}.
\eea

\begin{figure}[htbp]
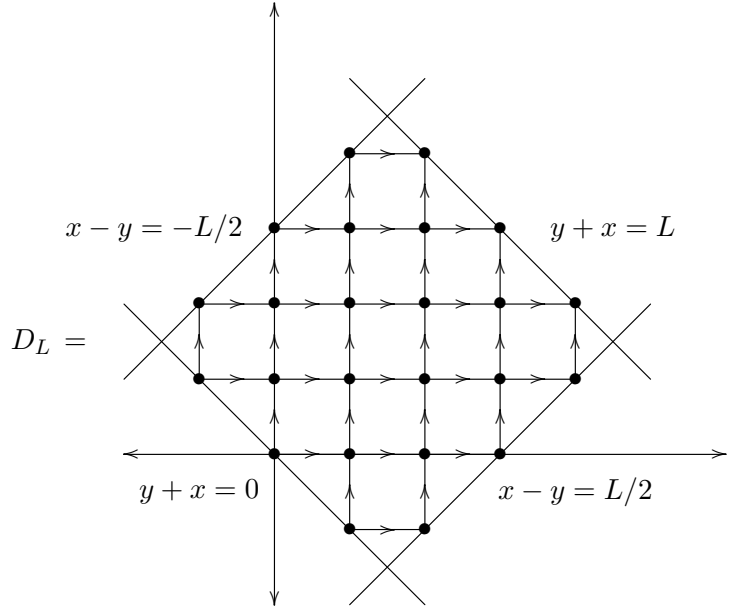

 \begin{center}
\[
\xy
(-30,15)*+{D_L\,=};
(10,-10)*+{\bullet}; 
(20,-10)*+{\bullet};
(0,0)*+{\bullet}; 
(10,0)*+{\bullet}; 
(20,0)*+{\bullet}; 
(30,0)*+{\bullet}; 
(-10,10)*+{\bullet}; 
(0,10)*+{\bullet}; 
(10,10)*+{\bullet}; 
(20,10)*+{\bullet}; 
(30,10)*+{\bullet}; 
(40,10)*+{\bullet}; 
(-10,20)*+{\bullet};
(0,20)*+{\bullet}; 
(10,20)*+{\bullet}; 
(20,20)*+{\bullet}; 
(30,20)*+{\bullet}; 
(40,20)*+{\bullet}; 
(0,30)*+{\bullet}; 
(10,30)*+{\bullet}; 
(20,30)*+{\bullet}; 
(30,30)*+{\bullet}; 
(10,40)*+{\bullet}; 
(20,40)*+{\bullet};
(20,-20); (-20,20) **\dir{-};
(-10,-5)*+{y+x=0};
(-20,10); (20,50)**\dir{-};
(-16,30)*+{x-y=-L/2};
(50,10); (10,50)**\dir{-};
(45,30)*+{y+x=L};
(50,20); (10,-20)**\dir{-};
(40,-5)*+{x-y=L/2};
(-10,10); (-10,20) **\dir{-};
(0,-20);  (0,60) **\dir{-};
(10,-10);  (10,40) **\dir{-};
(20,-10);  (20,40) **\dir{-};
(30,0);  (30,30) **\dir{-};
(40,10);  (40,20) **\dir{-};
(10,-10); (20,-10) **\dir{-};
(-20,0);  (60,0) **\dir{-};
(0,0);  (30,0) **\dir{-};
(-10,10);  (40,10) **\dir{-};
(-10,20);  (40,20) **\dir{-};
(0,30);  (30,30) **\dir{-};
(10,40); (20,40) **\dir{-};
{\ar@{->} (-4.1,10)*{}; (-4.09,10)*{}};
{\ar@{->} (-4.1,20)*{}; (-4.09,20)*{}};
{\ar@{->} (15.9,-10)*{}; (15.91,-10)*{}};
{\ar@{->} (15.9,40)*{}; (15.91,40)*{}};
{\ar@{->} (5.9,0)*{}; (5.91,0)*{}};
{\ar@{->} (15.9,0)*{}; (15.91,0)*{}};
{\ar@{->} (25.9,0)*{}; (25.91,0)*{}};
{\ar@{->} (5.9,10)*{}; (5.91,10)*{}};
{\ar@{->} (15.9,10)*{}; (15.91,10)*{}};
{\ar@{->} (25.9,10)*{}; (25.91,10)*{}};
{\ar@{->} (35.9,10)*{}; (35.91,10)*{}};
{\ar@{->} (5.9,20)*{}; (5.91,20)*{}};
{\ar@{->} (15.9,20)*{}; (15.91,20)*{}};
{\ar@{->} (25.9,20)*{}; (25.91,20)*{}};
{\ar@{->} (35.9,20)*{}; (35.91,20)*{}};
{\ar@{->} (5.9,30)*{}; (5.91,30)*{}};
{\ar@{->} (15.9,30)*{}; (15.91,30)*{}};
{\ar@{->} (25.9,30)*{}; (25.91,30)*{}};
{\ar@{->} (-10,15.9)*{}; (-10,15.91)*{}};
{\ar@{->} (0,5.9)*{}; (0,5.91)*{}};
{\ar@{->} (0,15.9)*{}; (0,15.91)*{}};
{\ar@{->} (0,25.9)*{}; (0,25.91)*{}};
{\ar@{->} (10,-4.1)*{}; (10,-4.09)*{}};
{\ar@{->} (10,5.9)*{}; (10,5.91)*{}};
{\ar@{->} (10,15.9)*{}; (10,15.91)*{}};
{\ar@{->} (10,25.9)*{}; (10,25.91)*{}};
{\ar@{->} (10,35.9)*{}; (10,35.91)*{}};
{\ar@{->} (20,-4.1)*{}; (20,-4.09)*{}};
{\ar@{->} (20,5.9)*{}; (20,5.91)*{}};
{\ar@{->} (20,15.9)*{}; (20,15.91)*{}};
{\ar@{->} (20,25.9)*{}; (20,25.91)*{}};
{\ar@{->} (20,35.9)*{}; (20,35.91)*{}};
{\ar@{->} (30,5.9)*{}; (30,5.91)*{}};
{\ar@{->} (30,15.9)*{}; (30,15.91)*{}};
{\ar@{->} (30,25.9)*{}; (30,25.91)*{}};
{\ar@{->} (40,15.9)*{}; (40,15.91)*{}};
{\ar@{->} (0,59.99)*{}; (0,60)*{}};
{\ar@{->} (0,-19.99)*{}; (0,-20)*{}};
{\ar@{->} (59.99,0)*{}; (60,0)*{}};
{\ar@{->} (-19.99,0)*{}; (-20,0)*{}};
\endxy 
\]
   \end{center}
   \caption{\footnotesize The sector $D_L$ for $L=6$.} 
   \label{figure:Diamond}
\end{figure}
For convenience, we choose $L = 2k$ where $k$ is an odd integer. An example of
the sector $D_L$ to which to particles are bound is given in Figure
\ref{figure:Diamond}. We partition $D_L$ into the following subsets:
\begin{align*}
 D_L^{\mathrm{int}}  &=\{(x,y): 0< x+y < L,|x-y|< L/2\}
\\D_L^{\mathrm{edge}}&=\{(x,y): x+y=0\}
\\D_L^{\mathrm{opp}}&=\{(x,y):x+y=L\}
\\D_L^{\mathrm{uside}}&=\{(x,y):x-y=-L/2\}
\\D_L^{\mathrm{lside}}&=\{(x,y):x-y=L/2\}
\end{align*}
We consider the GNS Hamiltonian $H^{D_\infty}_0$  associated with the
infinite volume ground state given in~(\ref{eqn:product.g.s}),
where $\Omega^{D_\infty}_0 = \bigotimes_{(x,y)\in D_\infty}|0\rangle$. The
variational vectors $\tilde{\varphi}_L\in \mathcal{H}^{D_\infty}_0$ are defined by
\begin{equation}
\tilde{\varphi}_L = \pi^{D_\infty}_0(A_L) \Omega^{D_\infty}_0
\end{equation}
where $A_L\in\caA_{D_L}$ is the local observable
\begin{equation*}
A_L = \sum_{\mbf{x} = (x,y)\in D_L}\lambda^{x+y}\sin
(k(x-y)) \sigma_{\mbf{x}}^1
\end{equation*}
with $k = \tfrac{2\pi}{L}$. Again, $\Vert \tilde{\varphi}_L \Vert = \Vert A_L \psi_0^{D_{L}^\partial}\Vert$ and 
\begin{equation*}
\left\langle \tilde{\varphi}_L, H^{D_\infty}_0 \tilde{\varphi}_L \right\rangle 
= \left\langle A_L \psi_0^{D_{L}^\partial}, H_{D_{L}^\partial} A_L \psi_0^{D_{L}^\partial}\right\rangle
\end{equation*}
where
\begin{equation*}
D_{L}^\partial =\{(x,y): 0\leq x+y\leq L + 1,|x-y|\leq L/2 + 1\}.
\end{equation*}
and we used that $H_{D_{L}^\partial} \psi_0^{D_{L}^\partial} = 0$. Notice that $(x,y)\in D_L^{\mathrm{edge}}$ if and only if $(x+\tfrac{L}{2},
y+\tfrac{L}{2})\in D_L^{\mathrm{opp}}$, and so
\begin{equation}
\sum_{(x,y)\in D_L^{\mathrm{opp}}} \sin^2 k(x-y) \hskip 10pt = \sum_{(x,y)\in D_L^{\mathrm{edge}}} \sin^2
k(x-y).
\label{eqn:SineSquared}
\end{equation}
Using \eqref{eqn:SineSquared} and our choice of $k$,
\begin{align}
\langle A_L \psi_0^{D_{L}^\partial}, H_{D_{L}^\partial} A_L \psi_0^{D_{L}^\partial}\rangle
&= 2(1-\cos(k))
\bigg[\sum_{(x,y)\in D_L^{\mathrm{int}}}\hskip -10pt \lambda^{2(x+y)}\sin^2 k(x-y)+
\frac{\lambda^2+ \lambda^{2L}}{1+\lambda^2} \hskip -5pt \sum_{(x,y)\in
D_L^{\mathrm{opp}}} \hskip -5pt \sin^2 k(x-y)\bigg] \\
&\quad + \frac{2\lambda^{2L+2}}{1+\lambda^2} \sum_{(x,y)\in
D_L^{\mathrm{opp}}}\hskip -5pt\sin^2 k(x-y)  \nonumber \label{eqn:HamInProd}
\end{align}
and
\begin{equation*}
\left\Vert A_L \psi_0^{D_{L}^\partial}\right\Vert^2= \sum_{(x,y)\in D_L^{\mathrm{int}}}\lambda^{2(x+y)}\sin^2
k(x-y)+(1 +\lambda^{2L})\sum_{(x,y)\in D_L^{\mathrm{opp}}} \sin^2 k(x-y).
\end{equation*}
All in all, it follows that the spectral gap is
bounded above by 
\begin{equation*}
\gamma(D_\infty) \leq \dfrac{\left\langle \tilde{\varphi}_L, H^{D_\infty}_0
\tilde{\varphi}_L \right\rangle}{\|\tilde{\varphi}_L\|^2}\leq 2\left(1-\cos\left(2\pi/L\right)\right) + \frac{2\lambda^{2L+2}}{1+\lambda^2},
\end{equation*}
where we used that $\sum_{(x,y)\in D_L^{\mathrm{opp}}} \sin^2 k(x-y) \leq
\|\tilde{\varphi}_L\|^2$. Since $\lambda<1$, the second term decays exponential in $L$ as $L\to\infty$, while the first one is of order $L^{-2}$.
\end{proof}

Analogously, we expect that this type of gap closure to occur for any
$d$-dimensional infinite volume with slant boundary when the outward
pointing normal is given by the vector $\vec{n} = (\log \lambda_1, \ldots, \,
\log \lambda_d)$.


\section{Discussion}\label{sec:Discussion}


\subsection{Stability}

\emph{Stability} refers to the continuous behavior of certain spectral
properties of quantum spin Hamiltonians under small local perturbations.
Although the case of perturbations around classical Hamiltonians was considered
in great detail long
ago~\cite{Ginibre:1969,Kirkwood:1983,Matsui:1990,Kennedy:1992,Borgs:1996,Datta:1996},
results for perturbations of frustration-free quantum models with interesting
ground states have only been achieved rather recently~\cite{Yarotsky:2006,
Bravyi:2010, Bravyi:2011, Michalakis:2013,Cirac:2013,Szehr:2014}. The most basic
stability results establish two properties; for models with a unique ground
state in the infinite volume limit and for sufficiently small perturbations, one
proves first of all that the spectral gap above the ground state energy does not
close, and secondly, that the unique ground state depends continuously on the
perturbation.
In~\cite{Michalakis:2013}, such stability results are obtained for models with a
unique frustration-free ground state under two additional conditions.
First, the perturbation is of the form $\sum_{X\subset\Lambda}\Psi(X)$ where,
roughly speaking, $\Vert \Psi(X) \Vert$ decays faster than any polynomial as a
function of the diameter of $X$. Second, the possibly many finite volume ground
states satisfy what is referred to as the \emph{local topological order}
condition in \cite{Michalakis:2013}. For any hypercube
$\Lambda_L$ of length $L$, consider sets $X\subset\Lambda_L$ such that
$X^{(l)}\subset \Lambda_L$, where $X^{(l)} = \{x\in\Gamma:d(x,X)\leq l\}$. In
the case of a multi-dimensional finite volume ground state space, the ground
states are called topologically ordered if there exists a positive function $f$
decaying faster than any polynomial such that for any
$A\in\caA(X)$,
\begin{equation}\label{TQO}
\left\Vert G_{X^{(l)}} A G_{X^{(l)}} - c_l(A) G_{X^{(l)}}\right\Vert \leq \Vert
A \Vert f(l)
\end{equation}
where
\begin{equation*}
 c_l(A) = \Tr(G_{X^{(l)}}) ^{-1}\Tr(G_{X^{(l)}} A).
\end{equation*}

We now prove that the models considered here satisfy \eq{TQO}. The spaces
$G_{X^{(l)}}$ are two-dimensional and spanned by $\psi_0^{X^{(l)}}$ and
$\psi_1^{X^{(l)}}$. We shall use the bound
\begin{equation*}
\left\Vert G_{X^{(l)}} A G_{X^{(l)}} - c_l(A) G_{X^{(l)}}\right\Vert 
\leq 2 \sup_{i,j\in\{0,1\}} \left\vert\left\langle \psi_i^{X^{(l)}} , A \psi_j^{X^{(l)}}\right\rangle - c_l(A) \delta_{ij}\right\vert.
\end{equation*}
Since $A$ is supported in $X$, we have that
\begin{equation*}
\left\langle \psi_1^{X^{(l)}} , A \psi_1^{X^{(l)}}\right\rangle = \frac{C(X)}{C(X^{(l)})} \left\langle \psi_1^{X} , A \psi_1^{X}\right\rangle + \frac{C(X^{(l)}\setminus X)}{C(X^{(l)})} \left\langle \psi_0 ^{X^{(l)}\setminus X} , A \psi_0^{X^{(l)}\setminus X}\right\rangle,
\end{equation*}
and the last scalar product can be replaced by $\left\langle\psi_0 ^{ X}
, A \psi_0^{ X}\right\rangle $. With this, both cases $i=j=0$
and $i=j=1$ are bounded above by
\begin{equation*}
\frac{1}{2} \left\vert \left\langle \psi_1^{X^{(l)}} , A \psi_1^{X^{(l)}}\right\rangle - \left\langle \psi_0^{X^{(l)}} , A \psi_0^{X^{(l)}}\right\rangle \right\vert
\leq \frac{\Vert A \Vert }{2} \left[\left\vert \frac{C(X)}{C(X^{(l)})} \right\vert 
+ \left\vert1-\frac{C(X^{(l)}\setminus X)}{C(X^{(l)})}\right\vert  \right] = \Vert A \Vert\left\vert \frac{C(X)}{C(X^{(l)})} \right\vert 
\end{equation*}
where we used that $C(X^{(l)}) - C(X^{(l)}\setminus X) = C(X)$ by the
definition~(\ref{CLambda}) of $C(\cdot)$. The off-diagonal case can be computed
similarly,
\begin{equation*}
\left\vert \left\langle \psi_0^{X^{(l)}} , A \psi_1^{X^{(l)}}\right\rangle\right\vert = \left\vert \left(\frac{C(X)}{C(X^{(l)})} \right)^{1/2}\left\langle \psi_0^{X} , A \psi_1^{X}\right\rangle\right\vert 
\leq \Vert A \Vert \left\vert \frac{C(X)}{C(X^{(l)})} \right\vert ^{1/2}.
\end{equation*}
Hence, \eqref{TQO} holds with $f(l) = 2\sqrt{C(X)/C(X^{(l)})}$, which decays
exponentially in $l$.

The models considered here have open boundary conditions, while the theorem
of~\cite{Michalakis:2013} assumes periodic boundary conditions. We believe,
however, that stability holds in the present more general setting, given the
frustration-free and topological quantum order conditions. This means that there
is an open neighborhood of local interactions around the models presented in
this work which have a spectral gap above the ground state energy, on $\bbZ^d$.

\subsection{Edge versus Bulk Gapless Excitations}

To conclude, we would like to underline the physical difference between the
gapless spectrum that arises in the case of $\lambda_k = 1$ for all $k$, and
the gapless spectrum we found in Section \ref{sec:UpperBound} for a class of
models on the slanted half-infinite plane $D_\infty$. As explained above, the
limit $\lambda\to1$ corresponds to a quantum phase transition in the sense
that the bulk dispersion relation becomes gapless. In the case of the slanted
edge the low-lying excited states are localized along the edge and nothing
special happens in the bulk.
In particular, excited states localized away from the edge show a gapped energy
spectrum, and the dynamics of observables whose support is at a distance $d$
from the edge is unaffected by the gapless excitations up to times $t$ that are
of the same order of magnitude as $d/v$, where the constant $v$ is the
Lieb-Robinson velocity of the model.

To make these statements more precise, consider the PVBS model on
$D_\infty\subset \bbZ^2$, the half-space with a 45 degree edges as discussed in
Section \ref{sec:UpperBound}, and let $\omega_0^{D_\infty}$ be the
unique vacuum state of this system with $(\pi_0^{D_\infty},\caH_0^{D_\infty},
\Omega_0^{D_\infty})$ its GNS representation. Let
$D_\infty^{\mathrm{edge}}\subset D_\infty$ denote the sites at the 45 degree
boundary of $D_\infty$.

Let
\begin{equation*}
\caH_0^{\mathrm{bulk}}:=\mathrm{span}\{\pi(A)\Omega_0^{D_\infty}: d(\mathrm{supp}(A), D_\infty^{\mathrm{edge}})\geq 1\}\subset\caH_0^{D_\infty},
\end{equation*}
and $P_0^{\mathrm{bulk}}$ the orthogonal projection onto
$\caH_0^{\mathrm{bulk}}$. Finally, let $H_{0}^{D_\infty}$ be the GNS Hamiltonian
associated with $\omega_o^{D_\infty}$.
Although the GNS Hamiltonian $H_0^{D_{\infty}}$ for the model on $D_\infty$ does
not have a gap above the ground state, the self-adjoint operator $P_0
H_0^{D_{\infty}}P_0$ regarded as an operator on $\cH_0^{D_{\infty}}$ does have a
spectral gap above its ground state at least as large as the gap of the model
defined on $\bbZ^2$.
\begin{prop}
With the notations above, let
\begin{equation*}
H_0^{\mathrm{bulk}} :=
P_0^{\mathrm{bulk}}H_{0}^{D_\infty}P_0^{\mathrm{bulk}}:
\caH_0^{\mathrm{bulk}}\to \caH_0^{\mathrm{bulk}}.
\end{equation*}
Then, $\inf\mathrm{Spec}(H_0^{\mathrm{bulk}}) = 0$ is an eigenvalue and if $\lambda\neq 1$, there is
a spectral gap $\gamma\geq \gamma(\bbZ^2)$.
\end{prop}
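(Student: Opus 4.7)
The plan is to transfer the bulk spectral gap from Theorem~\ref{thm:SpectralGap}(i) (applied to $\bbZ^2$) over to the compression $H_0^{\mathrm{bulk}}$. The key observation is that a local observable $A$ whose support $X$ lies at distance at least $1$ from the $45$ degree edge cannot distinguish $D_\infty$ from $\bbZ^2$: the commutator entering the GNS derivation involves only bulk bonds and the vacuum state restricts to the same pure product state on any finite volume in both cases.

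First I would dispatch the eigenvalue at $0$: since $\Omega_0^{D_\infty}=\pi_0^{D_\infty}(\idtyty)\Omega_0^{D_\infty}$ lies in $\caH_0^{\mathrm{bulk}}$ and $H_0^{D_\infty}\Omega_0^{D_\infty}=0$, the vector $\Omega_0^{D_\infty}$ is a zero-eigenvector of $H_0^{\mathrm{bulk}}$, and the positivity of the compression of $H_0^{D_\infty}\geq 0$ gives $\inf\mathrm{Spec}(H_0^{\mathrm{bulk}})=0$. Next, for a local $A\in\caA(X)$ with $d(X,D_\infty^{\mathrm{edge}})\geq 1$, the neighbourhood $X^{(1)}$ is contained in $D_\infty$, so the GNS derivation reduces to the finite commutator $[H_{X^{(1)}},A]$, which is identical to the corresponding object for the model on $\bbZ^2$. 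Combined with the fact that $\omega_0^{D_\infty}$ and $\omega_0^{\bbZ^2}$ both restrict to $\bigotimes_{\mathbf x}\ket{0}\bra{0}$ on any finite volume, this yields
\begin{equation*}
\left\langle \pi_0^{D_\infty}(A)\Omega_0^{D_\infty},\, H_0^{D_\infty}\pi_0^{D_\infty}(A)\Omega_0^{D_\infty}\right\rangle = \left\langle \pi_0^{\bbZ^2}(A)\Omega_0^{\bbZ^2},\, H_0^{\bbZ^2}\pi_0^{\bbZ^2}(A)\Omega_0^{\bbZ^2}\right\rangle,
\end{equation*}
along with matching norms and the equivalence between $\omega_0(A)=0$ and orthogonality to the vacuum in both GNS spaces.

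Now I would apply Theorem~\ref{thm:SpectralGap}(i) on $\bbZ^2$, which for $\lambda\neq 1$ gives $\left\langle \psi, H_0^{\bbZ^2}\psi\right\rangle \geq \gamma(\bbZ^2)\|\psi\|^2$ on the orthogonal complement of $\Omega_0^{\bbZ^2}$. Through the identification above, the same inequality transfers to vectors $\pi_0^{D_\infty}(A)\Omega_0^{D_\infty}$ with $A$ local, supported away from the edge, and satisfying $\omega_0(A)=0$. Such vectors are dense in $\caH_0^{\mathrm{bulk}}\ominus\bbC\Omega_0^{D_\infty}$ by the very definition of $\caH_0^{\mathrm{bulk}}$, and closing the quadratic form would give $H_0^{\mathrm{bulk}}\geq \gamma(\bbZ^2)$ on that complement, which is precisely the claimed gap.

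The main obstacle is the last step: one must verify that this dense set of local vectors is actually a form core for the unbounded compressed operator $P_0^{\mathrm{bulk}}H_0^{D_\infty}P_0^{\mathrm{bulk}}$, so that the pointwise quadratic-form inequality genuinely extends to a spectral statement. This should follow from the standard construction of $H_0^{D_\infty}$ as the closure of the bulk derivation on the local algebra (which is a core for the associated form), together with the observation that $P_0^{\mathrm{bulk}}$ preserves the span of local vectors supported away from the edge; however it requires care, since $P_0^{\mathrm{bulk}}$ is a nontrivial orthogonal projection that does not commute with $H_0^{D_\infty}$ in any obvious way.
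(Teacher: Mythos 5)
Your proposal is correct and follows essentially the same route as the paper's proof: both verify that $\Omega_0^{D_\infty}$ is a zero-eigenvector of the compression, identify the derivation $\delta_{D_\infty}(A)=\delta_{\bbZ^2}(A)$ for $A$ supported at distance at least $1$ from the edge, use that $\omega_0^{D_\infty}$ and $\omega_0^{\bbZ^2}$ agree on local observables to equate the quadratic forms, and then import the gap $\gamma(\bbZ^2)$ on the orthogonal complement of the vacuum. The only difference is that you explicitly flag the density/form-core step needed to pass from the inequality on local vectors to a spectral statement, a point the paper's proof leaves implicit.
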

\begin{proof}
First, we observe that $P_0^{\mathrm{bulk}}\Omega_0^{D_\infty} =
\Omega_0^{D_\infty}$ so that
\begin{equation*}
H_0^{\mathrm{bulk}} \Omega_0^{D_\infty} =
P_0^{\mathrm{bulk}}H_{0}^{D_\infty}\Omega_0^{D_\infty} = 0
\end{equation*}
which shows that $\Omega_0^{D_\infty}$ belongs to the kernel of the positive
operator $H_0^{\mathrm{bulk}}$.

We consider $A\in\cA_X$, with $d(X,D_\infty^{\mathrm{edge}})\geq 1$, and observe that the
generator of the dynamics on $D_\infty$, namely $\delta_{D_\infty}$, satisfies
\begin{equation}\label{deltas}
\delta_{D_\infty}(A)=\lim_{\Lambda\uparrow D_\infty} [H_\Lambda,A]=\lim_{\Lambda\uparrow \bbZ^2} [H_\Lambda,A] = \delta_{\bbZ^2}(A).
\end{equation}
Hence,
\begin{align*}
\langle \pi_0^{D_\infty}(A) \Omega_0^{D_\infty}, H_0^{\mathrm{bulk}} \pi_0^{D_\infty}(A)\Omega_0^{D_\infty}\rangle
&= \langle \pi_0^{D_\infty}(A) \Omega_0^{D_\infty}, H_{0}^{D_\infty}  \pi_0^{D_\infty}(A)\Omega_0^{D_\infty}\rangle
= \omega_0^{D_\infty}(A^* \delta_{D_\infty}(A)) \\
&= \omega_0^{\bbZ^2}(A^* \delta_{\bbZ^2}(A)),
\end{align*}
where in the last equality we first use ~(\ref{deltas}) and then the fact that
both ground states are the same product state on local observables. Now let $A$
be such that $\pi_0^{D_\infty}(A) \Omega_0^{D_\infty} \perp
\Omega_0^{D_\infty}$, or equivalently $0 = \omega_0^{D_\infty}(A) =
\omega_0^{\bbZ^2}(A)$.
Since the model on $\bbZ^2$ has a gap $\gamma(\bbZ^2)>0$, this implies
\begin{equation}\label{Pgap}
\langle \pi_0^{D_\infty}(A) \Omega_0^{D_\infty}, H_0^{\mathrm{bulk}} \pi_0^{D_\infty}(A)\Omega_0^{D_\infty}\rangle
\geq \gamma(\bbZ^2) \omega_0^{D_\infty}(A^*A)= \gamma(\bbZ^2) \Vert \pi_0^{D_\infty}(A)\Omega_0^{D_\infty}\Vert^2.
\end{equation}
The inequality \eq{Pgap} shows that $H_0^{\mathrm{bulk}}$ has a spectral gap above $0$ at least equal to
$\gamma(\bbZ^2)$, which we have shown to be strictly positive.
\end{proof}

Finally, we show that the dynamics of observables supported away from the edge
is unaffected by the gapless edge modes. In particular, multi-time correlation
functions for such observables behave as if the system was gapped for times that
are not loo large. To this end, recall that a nearest-neighbor interaction
defined on a regular lattice -- such as the present PVBS models -- satisfies a
Lieb-Robinson bound, with velocity $v$. In particular, the difference of the
dynamics generated by two finite-volume Hamiltonians $H_{\Lambda_1}$ and
$H_{\Lambda_2}$, where $\Lambda_2\subset\Lambda_1$, can be estimated
by 
\begin{equation*}
\left\| \tau_t^{\Lambda_1}(A) - \tau_t^{\Lambda_2}(A) \right\|  \leq
C\, \| A \|  \,
| X | e^{-a[d(X,\Lambda_1\setminus\Lambda_2) -v\vert t\vert]},
\end{equation*}
where $A\in \cA_X$, $X\subset\Lambda_2$, and with suitable constants $a, C>0$
(see ~\cite[equation (2.28)]{nachtergaele:2006}).
Let $\Lambda_{1,n}\to \bbZ^2$ and $\Lambda_{2,n} \to D_\infty$ be increasing
sequences such that $\Lambda_{2,n}\subset\Lambda_{1,n}$ for all $n\in\bbN$. For
$n$ large enough, $d(X,\Lambda_{1,n}\setminus\Lambda_{2,n}) =
d(X,D_\infty^{\mathrm{edge}})$, and in particular
\begin{equation*}
\left\| \tau_t^{D_\infty}(A) - \tau_t^{\bbZ^2}(A) \right\|  \leq
C\, \| A \|  \,
| X | e^{-a[d(X,D_\infty^{\mathrm{edge}}) -v\vert t\vert]}.
\end{equation*}
Hence, for times $t\ll d(X,D_\infty^{\mathrm{edge}}) / v$, the half-plane
dynamics of $A$ are well-approximated by the dynamics in the bulk generated by
the gapped Hamiltonian.

\section*{Acknowledgements}

This research was supported in part by the National Science Foundation: S.B.
under Grant \#DMS-0757581 and B.N. and A.Y.
under grant \#DMS-1009502. A.Y. also acknowledges support from the National
Science Foundation under Grant \#DMS-0636297. A.Y. would also like to thank the
Mathematisches Institut at Ludwig-Maximilians-Universit\"at M\"unchen and the
Erwin Schr\"odinger International Institute for Mathematical Physics (ESI) in
Vienna, Austria, where part of the work reported here was carried out. E. H  gratefully acknowledges the support of a Fulbright scholarship spent at University of California, Davis. 

\providecommand{\bysame}{\leavevmode\hbox to3em{\hrulefill}\thinspace}

\bibliographystyle{plain}

\begin{thebibliography}{10}

\bibitem{Bachmann:2011kw}
S.~Bachmann, S.~Michalakis, B.~Nachtergaele, and R.~Sims.
\newblock {Automorphic Equivalence within Gapped Phases of Quantum Lattice
  Systems}.
\newblock {\em Comm. Math. Phys.}, 309(3):835--871, November 2011.

\bibitem{Bachmann:2012bf}
S.~Bachmann and B.~Nachtergaele.
\newblock {Product vacua with boundary states}.
\newblock {\em Phys. Rev. B}, 86(3):035149, July 2012.

\bibitem{Bachmann:2014a}
S.~Bachmann and B.~Nachtergaele.
\newblock On gapped phases with a continuous symmetry and boundary operators.
\newblock {\em J. Stat. Phys.}, 154:91--112, 2014.

\bibitem{Borgs:1996}
C.~Borgs, R.~Koteck\'y, and D.~Ueltschi.
\newblock Low temperature phase diagrams for quantum perturbations of classical
  spin systems.
\newblock {\em Commun. Math. Phys.}, 181(2):409--446, 1996.

\bibitem{Bravyi:2010}
S.~Bravyi, M.~Hastings, and S.~Michalakis.
\newblock Topological quantum order: stability under local perturbations.
\newblock {\em J. Math. Phys.}, 51:093512, 2010.

\bibitem{Bravyi:2011}
S.~Bravyi and M.~B. Hastings.
\newblock A short proof of stability of topological order under local
  perturbations.
\newblock {\em Commun. Math. Phys.}, 307:609, 2011.

\bibitem{Chen:2013}
X.~Chen, Z.-C. Gu, Z.-X. Liu, and X.-G. Wen.
\newblock Symmetry protected topological orders and the group cohomology of
  their symmetry group.
\newblock {\em Phys. Rev. B}, 87:155114, Apr 2013.

\bibitem{Chen:2011iq}
X.~Chen, Z.-C. Gu, and X.-G. Wen.
\newblock {Classification of gapped symmetric phases in one-dimensional spin
  systems}.
\newblock {\em Phys. Rev. B}, 83(3):035107, January 2011.

\bibitem{Cirac:2013}
J.I. Cirac, S.~Michalakis, D.~Perez-Garcia, and N.~Schuch.
\newblock Robustness in projected entangled pair states.
\newblock {\em Phys. Rev. B}, 88:115108, 2013.

\bibitem{Datta:1996}
N.~Datta, R.~Fern\'andez, and J.~Fr\"ohlich.
\newblock Low-temperature phase diagrams of quantum lattice systems. i.
  stability for quantum perturbations of classical systems with finitely-many
  ground states.
\newblock {\em J. Stat. Phys.}, 84:455, 1996.

\bibitem{Duivenvoorden:2013}
K.~Duivenvoorden and T.~Quella.
\newblock {Topological phases of spin chains}.
\newblock {\em Phys. Rev. B}, 87:125145, 2013.

\bibitem{Ginibre:1969}
J.~Ginibre.
\newblock Existence of phase transitions for quantum lattice systems.
\newblock {\em Comm. Math. Phys.}, 14:205--234, 1969.

\bibitem{Haegeman:2013}
J.~Haegeman, S.~Michalakis, B.~Nachtergaele, T.J. Osborne, N.~Schuch, and
  F.~Verstraete.
\newblock {Elementary excitations in gapped quantum spin systems}.
\newblock {\em Phys. Rev. Lett.}, 111:080401, 2013.

\bibitem{Hastings:2013wp}
M.~Hastings.
\newblock {Classifying Quantum Phases With The Torus Trick}.
\newblock {\em arXiv preprint cond-mat.str-el/1305.6625v1}, 2013.

\bibitem{Kennedy:1992}
T.~Kennedy and H.~Tasaki.
\newblock Hidden symmetry breaking and the haldane phase in $s=1$ quantum spin
  chains.
\newblock {\em Commun. Math. Phys.}, 147:431--484, 1992.

\bibitem{Kirkwood:1983}
J.~R. Kirkwood and L.~E. Thomas.
\newblock Expansions and phase transitions for the ground state of quantum
  ising lattice systems.
\newblock {\em Commun. Math. Phys.}, 88:569--580, 1983.

\bibitem{Kitaev:2009}
A.~Kitaev.
\newblock Periodic table for topological insulators and superconductors.
\newblock In {\em {Advances in theoretical physics : Landau Memorial
  Conference, Chernogolovka, Russia, 22-26 June 2008}}, AIP conference
  proceedings, no. 1134. American Institute of Physics, 2009.

\bibitem{kitaev:2003}
A.~Yu. Kitaev.
\newblock Fault-tolerant quantum computation by anyons.
\newblock {\em Ann. Phys.}, 303:2, 2003.

\bibitem{Matsui:1990}
T.~Matsui.
\newblock Uniqueness of the translationally invariant ground state in quantum
  spin systems.
\newblock {\em Commun. Math. Phys.}, 126:453--467, 1990.

\bibitem{Michalakis:2013}
S.~Michalakis and J.~Pytel.
\newblock Stability of frustration-free hamiltonians.
\newblock {\em Commun. Math. Phys.}, 322:277--302, 2013.

\bibitem{nachtergaele:1996}
B.~Nachtergaele.
\newblock The spectral gap for some quantum spin chains with discrete symmetry
  breaking.
\newblock {\em Commun. Math. Phys.}, 175:565--606, 1996.

\bibitem{nachtergaele:2006}
B.~Nachtergaele, Y.~Ogata, and R.~Sims.
\newblock Propagation of correlations in quantum lattice systems.
\newblock {\em J. Stat. Phys.}, 124:1--13, 2006.

\bibitem{Perez-Garcia:2010}
D.~Perez-Garcia, M.~Sanz, C.E. Gonzalez-Guillen, M.M. Wolf, and J.I. Cirac.
\newblock Characterizing symmetries in a projected entangled pair state.
\newblock {\em New J. Phys.}, 12:025010, 2010.

\bibitem{Schuch:2011}
N.~Schuch, D.~Perez-Garcia, and I.~Cirac.
\newblock Classifying quantum phases using matrix product states and peps.
\newblock {\em Phys. Rev. B}, 84:165139, 2011.

\bibitem{Schuch:2013}
N.~Schuch, D.~Poilblanc, J.I. Cirac, and D.~Perez-Garcia.
\newblock Topological order in peps: Transfer operator and boundary
  hamiltonians.
\newblock {\em Phys. Rev. Lett.}, 111:090501, 2013.

\bibitem{Szehr:2014}
O.~Szehr and M.~M. Wolf.
\newblock Perturbation theory for parent hamiltonians of matrix product states.
\newblock arXiv:1402.4175, 2014.

\bibitem{Yang:2014}
S.~Yang, L.~Lehman, D.~Poilblanc, K.~Van Acoleyen, F.~Verstraete, J.I. Cirac,
  and N.~Schuch.
\newblock Edge theories in projected entangled pair state models.
\newblock {\em Phys. Rev. Lett.}, page 036402, 2014.

\bibitem{Yarotsky:2006}
D.~A. Yarotsky.
\newblock Ground states in relatively bounded quantum perturbations of
  classical lattice systems.
\newblock {\em Commun. Math. Phys.}, 261:799--819, 2006.

\end{thebibliography}

\end{document}